\documentclass[11pt,letterpaper]{article}

\usepackage{fix-cm}

\usepackage[margin=1in]{geometry}
\usepackage[T1]{fontenc}
\usepackage{times}
\usepackage[fontsize=10.5pt]{fontsize}
\usepackage{authblk}
\usepackage{amsmath}
\usepackage{amssymb}
\usepackage{graphicx}
\usepackage{float}
\usepackage{booktabs}
\usepackage{caption}
\DeclareCaptionFont{captionsize}{\fontsize{9}{11}\selectfont}
\usepackage[super,sort&compress]{natbib}
\usepackage{xurl}
\usepackage[hidelinks]{hyperref}

\renewenvironment{abstract}{%
  \begin{center}%
    {\bfseries \abstractname\vspace{-.5em}}%
  \end{center}%
  \quotation
}{\endquotation}

\title{Reputation and institutional certification as complementary trust mechanisms in a single online market}

\author[1,2,*]{Yuta Kido}
\author[1]{Yohsuke Ohtsubo}

\affil[1]{Graduate School of Humanities and Sociology, The University of Tokyo, Tokyo 113-0033, Japan}
\affil[2]{Japan Society for the Promotion of Science, Tokyo 102-0083, Japan}
\affil[*]{Corresponding author. E-mail: yu-kido@l.u-tokyo.ac.jp}

\date{\today}

\usepackage{placeins}
\usepackage{amsthm}
\usepackage{makecell}
\usepackage{multirow}
\usepackage{enumitem}
\theoremstyle{definition}
\newtheorem{definition}{Definition}[section]
\newtheorem{assumption}{Assumption}
\newtheorem{proposition}{Proposition}[section]
\newtheorem{lemma}{Lemma}[section]
\newtheorem{remark}{Remark}[section]
\newtheorem{corollary}{Corollary}[section]
\newcommand{\SItext}{%
  \section*{Supplementary Notes}\label{sec:si-methods}%
}
\newcommand{\SIresults}{%
  \section*{Supplementary Figures and Tables}\label{sec:si-results-top}%
}
\graphicspath{{figures/}}

\begin{document}

\maketitle

\vspace{-20pt}

\begin{abstract}
Reputation and institutional certification are the two main trust mechanisms under information asymmetry, yet their interaction remains poorly understood. Here we analyze nearly one million eBay listings of Pokémon trading cards. Each listing pairs the seller's reputation with their signal choice, that is, whether they used costly third-party certification, their own costless claim (self-grading), or neither to convey the card's quality. First, we find that certification and self-grading partition the market. Self-grading prevails where reputation is high and value low, whereas certification prevails in the market's opposite corner. Furthermore, seller reputation raises the price premium of self-grading but not of certification. We then show that both empirical patterns follow from a signaling model in which false self-grading damages reputation, whereas certification requires an up-front fee. These results suggest that, even within a single market, trust rests not on a choice between reputation and institutions but on a division of labor.
\end{abstract}

\medskip

\noindent Trust is a precondition for exchange in markets with information asymmetry. When buyers delegate the assessment of product quality to sellers, they are exposed to the risk that sellers will exploit this information advantage\citep{Coleman1990}. Akerlof\cite{Akerlof1970} showed that this asymmetry triggers adverse selection. Unable to verify quality, buyers offer prices below the true value of high-quality products, so high-quality sellers exit the market, leaving only sellers of low-quality products, the ``lemons.'' The problem is amplified in online markets, where buyers can neither inspect the goods directly nor monitor sellers through repeated face-to-face contact. Nevertheless, platforms such as eBay sustain tens of billions of dollars in transactions annually\citep{eBay2025_10K}, implying that alternative trust mechanisms address this information problem.

One such mechanism is reputation. In repeated interactions, reputation sustains cooperation when expected future gains outweigh the short-term benefit of cheating, a principle Axelrod\cite{Axelrod1984} called the ``shadow of the future.'' Online feedback systems embody this principle by making each seller's transaction history available to future buyers\citep{Lehdonvirta2022}. Like brand image\citep{Wernerfelt1988}, an established reputation is then too valuable to lose for one dishonest sale\citep{KrepsWilson1982}. Indeed, reputation scores correlate with price premiums\citep{Resnick2006}, and sellers who receive negative feedback leave the market more often\citep{CabralHortacsu2010}. Thus, reputation systems address information asymmetry by inferring trustworthiness from a seller's accumulated transaction history\citep{ResnickZeckhauser2002, Cook2009}.

A second mechanism operates through third-party institutions, which reduce transaction costs and constrain opportunistic behavior\citep{North1990, Williamson1985}. In online markets, one form this mechanism takes is third-party certification, in which sellers pay an independent service to certify quality (e.g., to prove the authenticity of branded goods\citep{DranovJin2010}). A certificate's credibility rests on the certifier's inspection. Low-quality goods cannot obtain a certificate, so certification separates high-quality goods from low-quality goods regardless of who sells them. Thus, like auditing\citep{Wilson1983}, third-party certification addresses information asymmetry by relocating the source of trust from the seller to an external institution.

\begin{figure}[t!]
\centering
\makebox[\textwidth][c]{\includegraphics[width=18cm]{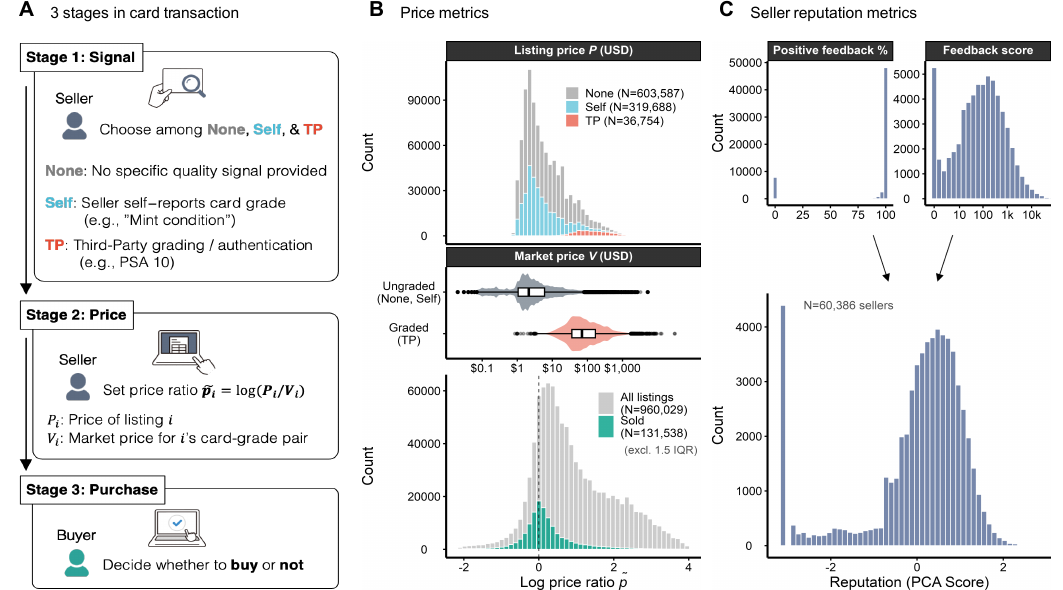}}
\caption{Descriptive overview of the Pokémon card transaction dataset. \textbf{(A)} Three stages of a card transaction. In Stage 1, a seller chooses a quality signal (None, Self-graded, or TP-graded). In Stage 2, the seller sets a listing price relative to the market price. In Stage 3, a buyer decides whether to purchase. \textbf{(B)} Price metrics. Top: distribution of listing prices (log USD) by signal type (None, $n = 603{,}587$; Self, $n = 319{,}688$; TP, $n = 36{,}754$). Middle: market price distributions for ungraded (None, Self) and graded (TP) items shown as horizontal violin-boxplots. Bottom: distribution of the log price ratio $\tilde{p} = \log(P/V)$ for all listings ($N = 960{,}029$) and sold items ($n = 131{,}538$), excluding outliers beyond 1.5 $\times$ the interquartile range (IQR); dashed line marks $\tilde{p} = 0$. \textbf{(C)} Seller reputation metrics. Top left: distributions of positive feedback percentage. Top right: distribution of feedback score (log-transformed). Arrows indicate dimensionality reduction using principal component analysis (PCA). Bottom: composite reputation score from the PCA ($N = 60{,}386$ sellers).}
\label{fig:1}
\end{figure}

These two mechanisms are the principal solutions that the social sciences have proposed for the trust problem in economic and social exchange\citep{CookHardinLevi2005, Yamagishi2011, GreifMokyrTabellini2025}. Trust research classifies them as informal and formal sources of trust and has largely studied them separately or, when comparing them, treated institutions as substitutes for reputation\citep{Zucker1986, Kollock1999, Granovetter1985, Greif1993}. Yet the two often coexist, especially on online platforms\citep{Aoki2001, PavlouGefen2004}. Few studies observe both mechanisms in the same listings, and those that do cover only a few thousand auctions\citep{DewallyEderington2006} or measure certification at the seller level\citep{ElfenbeinFismanMcManus2015}. Thus, how the two mechanisms interact within one market remains understudied\citep{Cook2009}.

Answering this question requires observing an entire market in which every item can carry either mechanism. The market for Pokémon trading cards on eBay meets these requirements. In 2026, a single card in certified top condition sold at auction for \$16.5 million\citep{gwr2026pikachu}, exemplifying the scale of this market. Yet a card's value depends on its condition, which buyers cannot verify from a listing, and counterfeiting is a persistent risk\citep{JinKato2006, PSA2026investment}. Information asymmetry between online sellers and buyers is thus severe. In response, third-party grading has grown into an industry whose largest grader alone grades nearly 20 million cards annually\citep{PSA2026investment}. Of these, Pokémon is the most frequently submitted category\citep{PSA2026pokemon}.

In the eBay Pokémon card market, third-party (TP) grading is the institutional mechanism, whereas reputation rests on a cumulative buyer-feedback system (Fig.~\ref{fig:1}C). Sellers can thus claim high quality in two ways, by having the card TP graded or by describing its condition themselves (self-grading). Alternatively, they can make no claim. We denote these three signaling options TP, Self, and None (Fig.~\ref{fig:1}A). The two signals differ in what they cost the seller: TP grading requires a fee, whereas self-grading costs nothing and is not verified at the time of sale, constituting a form of cheap talk\citep{CrawfordSobel1982}. Nonetheless, such cheap talk signals are widely used\citep{JinKato2006, MayzlinDoverChevalier2014} and have been shown to mitigate information asymmetry in other online markets\citep{Lewis2011}. Their credibility must therefore rest on a source other than the cost of producing them.

Here we propose that an established reputation is what makes such cheap talk credible, and that reputation and certification therefore serve as different trust mechanisms within the same market. We collected 960,029 listings, each carrying the seller's reputation and signal choice, and matched them to the cards' market prices (see Methods). We can therefore examine how sellers choose a signal according to card value and their own reputation, and how buyers respond to that choice (i.e., how the signal choice affects the price premium). We proceed in three steps. First, we address both questions with the full set of listings. Second, to uncover the rationale behind the resulting patterns, we model this market as a signaling game in which false self-grading damages reputation, whereas certification requires an up-front fee. We ask whether the empirical patterns follow from this cost structure. Third, we extend the game to repeated sales and ask how the balance between the two trust mechanisms shifts as a seller's reputation accumulates.

\begin{figure}[t!]
\centering
\makebox[\textwidth][c]{\includegraphics[width=18cm]{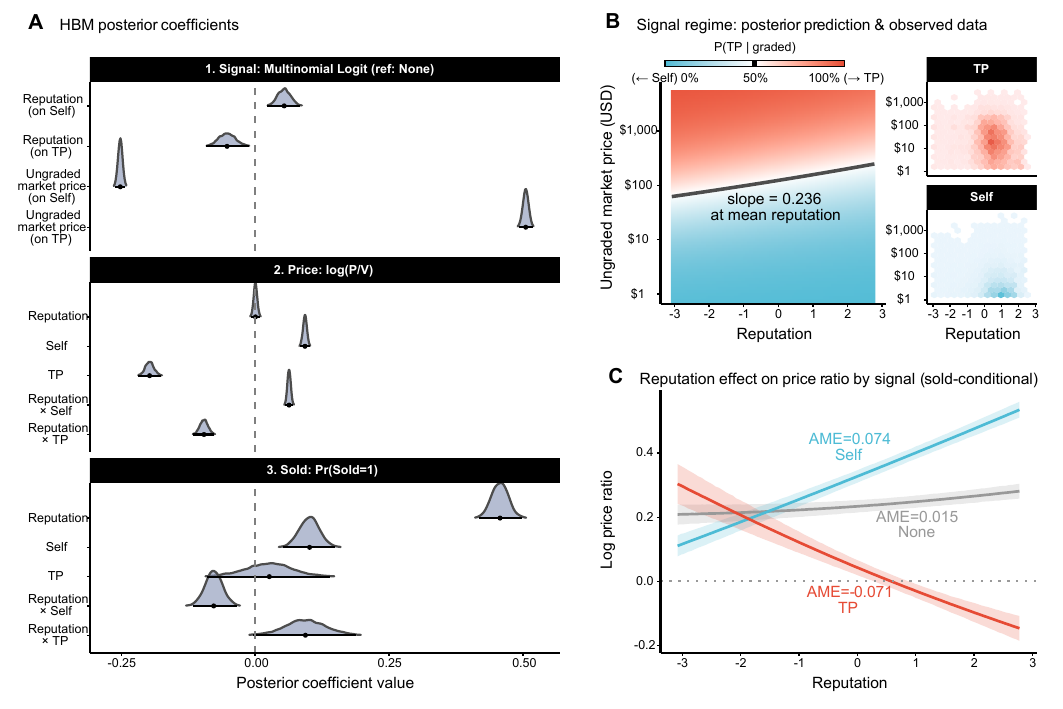}}
\caption{Hierarchical Bayesian model posterior estimates. \textbf{(A)} Ridgeline density plots of posterior coefficient distributions from the three-equation HBM. Top: signal-selection stage (multinomial logit, reference category was None). Middle: price-setting stage (the dependent variable was the log price ratio $\tilde{p} = \log(P/V)$). Bottom: sold stage (the dependent variable was $\Pr(\text{Sold} = 1)$). Vertical dashed line indicates zero. Points within posterior distributions represent medians; horizontal bars show 99\% highest density intervals (HDI). \textbf{(B)} Signal regions: posterior prediction and observed data. Left: region map showing posterior predicted probability of TP grading $P(\text{TP} \mid \text{graded})$ across reputation ($x$) and ungraded market price ($y$, log scale). Red indicates high probability (TP-dominant) and blue low probability (Self-dominant). Boundary contour at 50\%. Right: hexbin density of observed TP-graded (top, red) and Self-graded (bottom, blue) listings. \textbf{(C)} Predicted reputation effect on log price ratio by signal type for sold cards. Lines show predicted marginal effects, with the shaded band giving the 95\% credible interval.}
\label{fig:2}
\end{figure}

\section*{Results}

\subsection*{Empirical analysis}

We fitted a hierarchical Bayesian model (HBM) to $N = 960{,}029$ fixed-price listings, jointly modeling the three sequential stages of an eBay transaction (Fig.~\ref{fig:1}A). In Stage 1, the seller chooses among three signals (TP, Self, or None). Empirically, sellers chose TP in 4\% of listings, Self in 33\%, and None in 63\%. In Stage 2, the seller sets a listing price $P$ relative to the item's grade-appropriate market price $V$, yielding the log price ratio $\tilde{p} = \log(P/V)$ (Fig.~\ref{fig:1}B; see Methods for market price). The denominator $V$ is the graded market price for TP listings and the ungraded market price for Self and None, whereas the item-value predictor in Stage 1 is the ungraded market price for all listings, providing a uniform scale. In Stage 3, a buyer either accepts the listed price or does not. Listing prices skew toward premiums, while sold prices concentrate on the market price (Fig.~\ref{fig:1}B, bottom). The other primary predictor entering all three stages is each seller's reputation score, constructed as the first principal component of two public eBay feedback indicators (feedback score and positive feedback percentage; Fig.~\ref{fig:1}C; see Methods). The HBM converged well for all three stages ($\hat{R} \leq 1.013$ across all 38 parameters). Fig.~\ref{fig:2}A displays the posterior densities for the key parameters (see Supplementary Table~1, Supplementary Fig.~5, and Supplementary Table~2 for convergence diagnostics, model fit, and complete coefficient tables).

First, in the signal-selection stage (Fig.~\ref{fig:2}A, top), we modeled sellers' choices with a multinomial logit. None served as the reference category, and reputation and item value were the primary predictors (see Methods for the full specifications). For reputation, higher-reputation sellers were more likely to choose Self but less likely to choose TP (reputation on Self: $+0.054$ [99\% highest density interval (HDI): $0.023$, $0.083$]; reputation on TP: $-0.052$ [$-0.092$, $-0.011$], see also Fig.~\ref{fig:2}B). For item value, sellers were more likely to choose TP for higher-value items but more likely to choose Self for lower-value items ($\log V$ on TP: $+0.505$ [$0.492$, $0.518$]; $\log V$ on Self: $-0.252$ [$-0.261$, $-0.242$]). As shown in the left panel of Fig.~\ref{fig:2}B, the reputation--value space split into two signal-dominant regions. Third-party grading prevailed in the high-value region, and self-grading dominated in the low-value region. More importantly, the boundary's market price increased with seller reputation. We confirmed this partition with a logistic boundary regression, finding a positive slope (median $\beta = 0.236$ at mean reputation, 99\% HDI: $[0.197, 0.274]$; see Supplementary Fig.~4 for a descriptive boundary with the same slope).

Second, in the price-setting stage (Fig.~\ref{fig:2}A, middle), reputation had a negligible main effect on listing prices ($+0.001$ [99\% HDI: $-0.008$, $0.009$]). Instead, reputation interacted strongly with signal type. The reputation $\times$ self-graded interaction was positive ($+0.063$ [$0.056$, $0.071$]), suggesting that reputation conferred credibility on self-graded listings and generated a corresponding price premium. By contrast, the reputation $\times$ TP grading interaction was negative ($-0.095$ [$-0.115$, $-0.077$]). TP credibility is derived from an external institution rather than the seller, which predicts no reputation effect. Thus, although the absence of a positive effect is consistent with this prediction, the observed negative effect is not readily interpretable. We will discuss the possible interpretations in the Discussion section (see also Supplementary Figs.~2 and 3).

Third, in the purchase stage, the coefficient on the listing-price premium was negative ($-0.951$ [99\% HDI: $-0.969$, $-0.937$]), indicating that higher premiums (i.e., prices above the market) reduced the probability of sale, but reputation counteracted this effect ($+0.457$ [$0.419$, $0.497$]; Fig.~\ref{fig:2}A, bottom). Because only 13.7\% of listings resulted in a sale, listing-price coefficients alone do not capture the premiums actually realized on sold items. To recover these realized premiums while avoiding selection bias from analyzing only sold items, we combined the pricing and purchase models through posterior predictive simulation, yielding sold-conditional expected prices (Fig.~\ref{fig:2}C; see Methods for details). These simulated prices confirmed the reputation $\times$ signal type interaction estimated above. The average marginal effect of reputation on the log price ratio was positive for self-graded listings (AME $= +0.074$) and negative for TP-graded listings (AME $= -0.071$). A descriptive analysis of sold items also corroborated this pattern (Supplementary Fig.~1; see Methods for the simulation procedures).

So far, we have shown that reputation interacts with signal type in two distinct ways. First, the reputation--value space separates along a positively sloping boundary (Fig.~\ref{fig:2}B). Self-grading prevails at high reputation and low value, whereas TP grading prevails at low reputation and high value. Second, reputation amplifies the price premium of costless self-grading while lowering that of costly TP grading. We next develop a signaling game that accounts for these patterns.

\subsection*{Signaling model analysis}

\subsubsection*{Model setup}

We formulate a signaling game in which the seller privately observes product quality and chooses among the three signals (Fig.~\ref{fig:3}; parameters in Table~\ref{tab:1}). Nature draws a quality type $\theta \in \{H, L\}$ with prior probability $P(\theta = H) = \pi_0$. To the buyer, a high-quality item ($\theta = H$) has a value of $V > 0$, while a low-quality item ($\theta = L$) has a value of zero. For the sake of simplicity, we assume that the prior $\pi_0$ is independent of $V$. After observing $\theta$, the seller chooses a signal $s \in \{\text{TP}, \text{Self}, \text{None}\}$. Buyers observe both $s$ and the seller's public reputation $\rho \in [0, 1]$, a cumulative record of past feedback, and form a posterior belief $\hat{p}_s = P(\theta = H \mid s, \rho)$ that the item is high-quality. Under the fixed-price format, in which the seller posts a price, we assume that buyers accept the listed price if it does not exceed the item's expected value $\hat{p}_s V$. Therefore, the equilibrium price for signal $s$ is $\hat{p}_s V$. When the seller sends no signal ($s = \text{None}$), the item trades at the outside-option price $\pi_0 V$, which we take as exogenous and equal to the price under the prior $\pi_0$ (see Methods). We characterize the perfect Bayesian equilibria (PBE) of the model and, where several coexist, select among them with a standard refinement (see Methods).

\begin{table}[t!]
\centering
\caption{Parameters of the theoretical model.}
\label{tab:1}
\fontsize{7}{8.4}\selectfont
\begin{tabular}[t]{lp{4.2cm}l}
\toprule
\fontsize{8}{9.6}\selectfont Symbol & \fontsize{8}{9.6}\selectfont Description & \fontsize{8}{9.6}\selectfont Value / Range\\
\midrule
\addlinespace[0.3em]
\multicolumn{3}{l}{\fontsize{8}{9.6}\selectfont \textbf{Model parameters}}\\
\hspace{1em}$\theta$ & Product quality type & $\{H, L\}$\\
\hspace{1em}$s$ & Seller's signal choice & $\{\text{TP}, \text{Self}, \text{None}\}$\\
\hspace{1em}$\hat{p}_s$ & Buyer's posterior; $P(\theta{=}H \mid s, \rho)$ & $[0,\, 1]$\\
\hspace{1em}$\pi_0$ & Prior $P(\theta{=}H)$ & 0.3\\
\hspace{1em}$\phi$ & Penalty coefficient for detected fraud & 0.5\\
\hspace{1em}$c_{\text{TP}}$ & Cost of third-party authentication & 0.2\\
\addlinespace[0.3em]
\multicolumn{3}{l}{\fontsize{8}{9.6}\selectfont \textbf{State variables}}\\
\hspace{1em}$\rho$ & Seller reputation & $[0,\, 1]$\\
\hspace{1em}$V$ & Item value ($V_L{=}0$ normalized) & $\mathbb{R}_+$\\
\bottomrule
\end{tabular}
\end{table}

\begin{figure}[t!]
\centering
\includegraphics{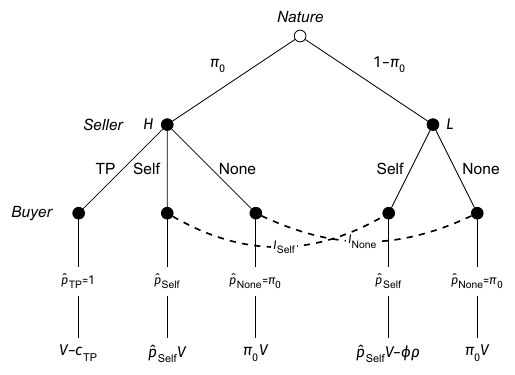}
\caption{Extensive-form game tree of the seller signaling model. Nature assigns quality type $H$ (high quality) with probability $\pi_0$ or $L$ (low quality) with probability $1 - \pi_0$. $H$-type sellers choose among three signals: TP-graded (third-party certification at cost $c_{\text{TP}}$; buyers' posterior $\hat{p}_{\text{TP}} = 1$; the seller's payoff is $V - c_{\text{TP}}$), Self-graded (no cost; reputation unaffected; buyers' posterior $\hat{p}_{\text{Self}}$; payoff $\hat{p}_{\text{Self}} V$), or None (no cost; buyers' posterior $\hat{p}_{\text{None}} = \pi_0$; payoff $\pi_0 V$). $L$-type sellers choose between Self-graded (no cost; reputational damage $\phi\rho$ upon detection; payoff $\hat{p}_{\text{Self}} V - \phi\rho$) and None (no cost; payoff $\pi_0 V$). Dashed curves denote buyer information sets: $I_{\text{Self}}$ includes the two Self-graded nodes, and $I_{\text{None}}$ includes the two None nodes (i.e., based on the signal alone, the buyer cannot distinguish $H$ from $L$). TP-graded is available only to $H$-type sellers, so its node stands alone. The tree shows Stage 1 of Fig.~\ref{fig:1}; the equilibrium price $\hat{p}_s V$ carries Stages 2 and 3.}
\label{fig:3}
\end{figure}

\subsubsection*{Trust through self-claim and reputation}

Consider first the use of self-grading. This signal is available to both high- and low-quality sellers and requires no immediate financial costs. However, these two types face asymmetric consequences under the following assumptions. When an $H$-type seller self-grades, the claim is accurate and the seller's reputation is unaffected. When an $L$-type seller self-grades, the claim is dishonest and triggers buyers' negative feedback after the transaction. This damages the seller's reputation, leading to a future payoff loss of $\phi\rho$, where $\phi > 0$ scales the loss with the seller's current reputation $\rho$.

Due to this asymmetry in the consequences of self-grading, we can derive a condition under which $H$-type sellers choose Self and $L$-type sellers choose None, with incentive-compatible strategies and Bayes-consistent buyer beliefs (i.e., a separating PBE). The $L$-type seller's expected payoff is $\pi_0 V$ when choosing None and $V - \phi\rho$ when choosing Self. Therefore, the $L$-type seller chooses None as long as $\pi_0 V > V - \phi\rho$, which can be rearranged to

\begin{equation}
V < \frac{\phi}{1-\pi_0}\,\rho \equiv \bar{V}(\rho).
\label{eq:vbar}
\end{equation}
As the slope $\phi/(1-\pi_0)$ is strictly positive, $\bar{V}(\rho)$ is an increasing linear function of $\rho$, which appears as the positively sloping line in Fig.~\ref{fig:4}. The blue region below this line corresponds to the Self-separating equilibrium. Intuitively, $\bar{V}(\rho)$ is the maximum item value at which reputation alone keeps self-grading honest. A seller with a higher reputation has more to lose from misrepresentation and can therefore credibly claim high quality for higher-value items. An established reputation thus serves as collateral backing the seller's own claim, and we call $\bar{V}(\rho)$ the reputation-collateral bound.

\subsubsection*{Trust through institutional certification}

When the item value $V$ exceeds $\bar{V}(\rho)$, the deception gain outweighs the reputation collateral, and self-grading supports at most a semi-separating outcome. $L$-type sellers then choose Self dishonestly with positive probability, and a self-claim is only partially credible (see Methods). Even then, buyers can trust third-party certification. Because $L$-type items cannot pass third-party authentication, observing TP perfectly identifies the item as high-quality, and this credibility owes nothing to the seller's reputation. The certifying institution charges a fixed cost $c_{\text{TP}}$, which is borne by $H$-type sellers (see Supplementary Note~7 for value-dependent fees). For the $H$-type, the TP payoff must be at least the None payoff, $V - c_{\text{TP}} \geq \pi_0 V$, which requires

\begin{equation}
V \geq \frac{c_{\text{TP}}}{1-\pi_0} \equiv \underline{V}.
\label{eq:vunderline}
\end{equation}
Because $c_{\text{TP}}$ does not depend on $\rho$, this bound is the same at every reputation level. Reaching it makes certification viable, in the sense that a TP-separating equilibrium exists. Viability alone does not yet make TP grading the $H$-type's preferred signal. Above $\bar{V}(\rho)$, a partially credible self-claim pays the $H$-type $\pi_0 V + \phi\rho$ rather than $\pi_0 V$. The self-claim thus carries a premium of $\phi\rho$, exactly the penalty that a false claim would incur. Certification pays the $H$-type more than the self-claim only when $V - c_{\text{TP}} > \pi_0 V + \phi\rho$, that is, when $V > \bar{V}(\rho) + \underline{V}$ (see Methods for the equilibrium selection). This certification boundary is the reputation-collateral bound shifted upward by $\underline{V}$. The two boundaries in Fig.~\ref{fig:4} are therefore parallel and enclose a band of vertical width $\underline{V}$. Above the certification boundary, $H$-type sellers choose TP, and $L$-type sellers, unable to pass authentication, choose None. Full separation is restored, now resting on an institutional ex-ante cost rather than on a reputational ex-post penalty.

\subsubsection*{Regime structure and empirical correspondence}

\begin{figure}[t!]
\centering
\includegraphics{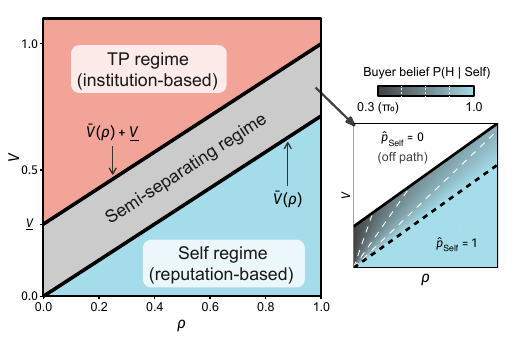}
\caption{Three-regime map of the reputation--value space. The map (left) partitions the $(\rho, V)$ space, where $\rho$ is seller reputation and $V$ is item value, into three regimes, colored categorically. Below the lower boundary $\bar{V}(\rho)$ (Eq.~\ref{eq:vbar}) lies the Self regime (blue), in which self-grading fully separates the types. Above the upper boundary $\bar{V}(\rho) + \underline{V}$ lies the TP regime (red), in which the $H$-type prefers TP grading. Between the two parallel boundaries lies the Semi-separating regime (gray), a band of vertical width $\underline{V}$ (Eq.~\ref{eq:vunderline}). The right panel shows the same space with item value cut at $\bar{V}(1) + \underline{V}$, colored by the buyer's belief in a self-claim, $\hat{p}_{\text{Self}}$. The belief is $1$ throughout the Self regime and rises with reputation as $\pi_0 + \phi\rho/V$ across the Semi-separating band. Above the band, Self is off the equilibrium path and is left blank. Baseline parameters: $\pi_0 = 0.3$, $\phi = 0.5$, $c_{\text{TP}} = 0.2$.}
\label{fig:4}
\end{figure}

\begin{figure}[b!]
\centering
\includegraphics{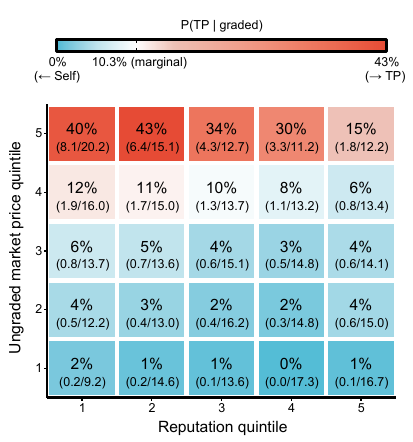}
\caption{Empirical TP-graded certification rates across reputation and market price. Discrete $5 \times 5$ heatmap of the third-party certification ratio $P(\text{TP} \mid \text{graded})$, the share of TP-graded listings among all graded listings, across seller reputation quintiles (columns, $\rho$) and ungraded market price quintiles (rows, $V$). Each cell reports the TP-graded proportion and the sample sizes ($n_{\text{TP}}/n_{\text{total}}$, in thousands, rounded to the nearest hundred). The diverging color scale is centered on the marginal TP rate, with red for above (TP-dominant) and blue for below (Self-dominant).}
\label{fig:5}
\end{figure}

Together, the two boundaries partition the reputation--value space $(\rho, V)$ into three regimes (Fig.~\ref{fig:4}), each corresponding to a different way of producing trust. In the Self regime, reputation alone sustains honest self-claims, and the institution is not used. In the Semi-separating regime, reputation still constrains self-claims but only in part; self-claims are discounted, and no signal fully separates the types. In the TP regime, the institution replaces reputation as the source of credibility.

The theoretical regime structure generates a testable prediction. The certification boundary $\bar{V}(\rho) + \underline{V}$ is positively sloping in $(\rho, V)$ space, as $\bar{V}(\rho)$ increases with reputation. This prediction is consistent with the empirical region map (Fig.~\ref{fig:2}B), whose boundary has the same positive slope. This pattern is further corroborated by the discrete heatmap in Fig.~\ref{fig:5}, which partitions the data into reputation--value quintile cells and shows the TP share among graded listings (see Supplementary Fig.~6 for the marginal decompositions along each axis). Third-party grading peaks in the low-reputation, high-value corner, while self-grading escalates sharply as reputation increases and value decreases.

While the three regimes describe which signal sellers use, the pricing interaction in Fig.~\ref{fig:2}C arises within the Semi-separating regime. The right panel of Fig.~\ref{fig:4} shows, across the regimes, the buyer's posterior belief $\hat{p}_{\text{Self}}$ that a self-graded item is high-quality. In the Self regime, self-grading separates the types fully, so $\hat{p}_{\text{Self}} = 1$, independent of reputation. In the Semi-separating regime, low-quality sellers mix between a dishonest self-claim (payoff $\hat{p}_{\text{Self}}V - \phi\rho$) and None (payoff $\pi_0 V$), and their indifference between the two pins this belief at $\hat{p}_{\text{Self}} = \pi_0 + \phi\rho/V$ (see Methods; Supplementary Note~7). The belief increases in reputation, $\partial\hat{p}_{\text{Self}}/\partial\rho = \phi/V > 0$, because higher reputation partially deters misrepresentation. The equilibrium price equals $\hat{p}_{\text{Self}}\,V$, so the gradient accounts for the positive effect of reputation on the self-grading premium. TP grading, by contrast, derives its credibility from an external institution rather than the seller, so $\hat{p}_{\text{TP}} = 1$ for all $\rho$ and $\partial\hat{p}_{\text{TP}}/\partial\rho = 0$, predicting no reputation $\times$ TP grading interaction (see Discussion for interpretations of the observed negative interaction).

Taken together, the signaling model recovers the equilibrium structure underlying our empirical analysis, reproducing the empirical region map (Fig.~\ref{fig:2}B) and the positive reputation $\times$ self-graded interaction (Fig.~\ref{fig:2}C) from the difference in cost structure between the two trust mechanisms.

\subsection*{Reputation dynamics}

The signaling game above takes reputation as given. We now build on it by letting reputation evolve over repeated sales. Within each period, buyers and sellers play the static equilibrium of the game at the current reputation, with buyers forming the posterior $\hat{p}$ from that reputation. Across periods, reputation changes with the outcomes of trade. We assume that the sale of a high-quality item raises reputation and that a detected misrepresentation lowers it by the same amount, so a single rate $\alpha$ governs both changes. Because quality is drawn per listing, a share $\pi_0$ of a seller's listings is high quality, and, with no misrepresentation, reputation would grow at the rate $\pi_0\alpha$.

Consider a seller for whom certification is unavailable, facing items of value $V$, so that self-grading is the only quality signal. The resulting dynamics have a single interior rest point, which is unstable (Fig.~\ref{fig:6}A; see Methods). Below it, misrepresentation is frequent enough that reputation falls to zero and stays there, a low-trust trap. Above it, misrepresentation is rare enough that reputation grows until self-grading separates the types. Sellers who enter without an established reputation therefore have the strongest incentive to misrepresent and remain caught in the low-trust trap.

Certification enables an escape from this trap (Fig.~\ref{fig:6}B). When the item value exceeds $\bar{V}(\rho) + \underline{V}$, TP grading is available to $H$-type sellers. This matters most when the $H$-type's reputation is low. TP grading allows low-reputation $H$-type sellers to separate from low-reputation $L$-type sellers, who choose None (compare the left side of Fig.~\ref{fig:6}B with that of Fig.~\ref{fig:6}A). Reputation therefore accumulates at the full rate $\pi_0\alpha$ even when it is low, covering the range in which reputation alone cannot sustain credible self-grading. When the certification fee is small enough, reputation accumulates at every level and no trap forms, even after self-grading replaces certification as the $H$-type's preferred signal (see Methods). Certification is thus not only a static substitute for reputation in a single transaction. It can also build the reputation that later makes self-claims credible.

\begin{figure}[t!]
\centering
\includegraphics{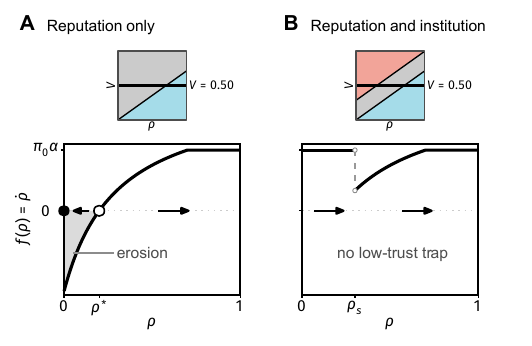}
\caption{Reputation dynamics under the two trust mechanisms. \textbf{(A)} Reputation only. \textbf{(B)} Reputation and institution. The top strip locates the slice in the $(\rho, V)$ plane (blue, Self; gray, Semi-separating; red, TP). Certification is unavailable in (A), so no TP regime arises; (B) reproduces the partition of Fig.~\ref{fig:4}. Below it is the phase line $f(\rho) = \dot{\rho}$ for the seller's reputation (see Methods). On the zero line, an open circle marks the unstable rest point, and arrows give the direction of motion. The filled circle at $\rho = 0$ marks the absorbing boundary. Shading marks the interval on which reputation erodes. In \textbf{(B)} the accumulation rate falls discontinuously at the reputation $\rho_s$ at which $H$-type sellers switch from certification to self-grading. Where the slice leaves the gray band in the top strip, the full rate is restored. Baseline parameters: $\pi_0 = 0.3$, $\phi = 0.5$, $c_{\text{TP}} = 0.2$. Both panels are drawn at $V = 0.50$.}
\label{fig:6}
\end{figure}

\section*{Discussion}

Our analyses show that reputation and third-party certification operate as complementary sources of trust, each grounded in a distinct form of costly signaling. Reputation amplifies the price premium of self-grading, a cheap-talk signal, but not that of TP grading (Fig.~\ref{fig:2}C). In the signaling game, this pattern follows from the two cost structures. Reputation collateral renders self-grading credible through the reputational cost of misrepresentation\citep{Sobel1985}. Institutional verification renders TP grading credible through an immediate certification fee that does not depend on seller reputation. The two mechanisms thus share a costly-signaling structure but operate at different times. Reputation imposes an ex-post penalty for misrepresentation, while certification imposes an ex-ante monetary cost (see ref.~\citenum{Quek2021} for this distinction in political science and refs.~\citenum{Lewis2011} and \citenum{OstromGardnerWalker1994} for other forms of ex-post cost). The two mechanisms partition the reputation--value space into three equilibrium regimes, matching the empirical regions (Fig.~\ref{fig:2}B). The mechanism that resolves the information asymmetry thus varies with seller reputation and item value. Extending the game to repeated sales shows that the two mechanisms also complement each other over time, because certification can build the reputation that later makes self-claims credible.

\subsection*{Substitution and complementarity of trust mechanisms}

While the regime-level account captures the broad pattern of signal use, the reputation $\times$ TP grading interaction admits a more specific behavioral interpretation. In the model, TP credibility is independent of seller reputation, and a high reputation makes TP grading informationally redundant. Both implications predict no such interaction. The observed interaction, however, is negative (Fig.~\ref{fig:2}C) and cannot be captured by the signaling game, so it warrants closer scrutiny. We interpret it as a strategic pricing adjustment by high-reputation sellers (see Supplementary Figs.~2 and 3 for the full analysis). Their TP listings show lower margins conditional on sale (Supplementary Fig.~3B), offset by higher sale probabilities (Supplementary Fig.~2B), leaving expected margins comparable across reputation levels (Supplementary Fig.~3D). In other words, low-reputation TP sellers sell few items at high margins, while high-reputation TP sellers sell many at low margins. This pattern echoes prior evidence that institutional verification substitutes for reputation\citep{HuiSaeediShenSundaresan2016}. Either reputation or TP grading alone can support credibility and a price premium, so the two act as substitutes in pricing terms.

Consistent with this substitution view, prior trust theory has classified trust-producing mechanisms as informal versus formal and often treated them as substitutes selected on cost\citep{Zucker1986, Kollock1999, Putnam2000}. Our analysis replaces this dichotomy with three regimes of trust production. Reputation produces trust on its own where item values are low and reputation is high, the institution produces it where item values are high and reputation is low, and, in between, reputation produces it only in part. The empirical maps show the same structure, with certification concentrated in the low-reputation, high-value corner and self-grading in the opposite corner (Figs.~\ref{fig:2}B and \ref{fig:5}). Market-wide, the two mechanisms are complements. Together they extend credible exchange across a wider range of sellers and items\citep{JinLeslie2009, Bai2025}.

Importantly, the three regimes arise for any positive reputational penalty and certification cost. Parameter changes shift only the boundaries (see Supplementary Notes~6 and 7 for comparative statics). The logic can extend beyond trading cards to any exchange that combines a reputational penalty with an institutional cost. The reputational penalty need not take the form of a feedback score. Among medieval Mediterranean traders, it took the form of collective punishment, in which no merchant would employ an agent who had cheated\citep{Greif1993, Greif1994}. The institutional cost, in turn, need not be a per-item certification fee. In the nineteenth-century United States, where migration had eroded the personal trust of settled communities, it took the form of trust purchased from certifying institutions and financial intermediaries\citep{Zucker1986}. Wherever quality is uncertain and a reputational penalty and an institutional cost exist in some form, the same three-regime structure should govern which mechanism sustains trust.

\subsection*{The scalability of trust}

The two costs in the model, the ex-ante fee and the ex-post penalty, have counterparts in biological signaling. TP grading corresponds to ex-ante cost signaling, in which honesty rests on the cost of producing the signal\citep{Zahavi1975}. Ex-post cost signaling, in turn, has been documented in primates and social insects. Signals with little or no production cost remain honest when receivers penalize deceptive signalers\citep{TibbettsDale2004, Silk2000, TibbettsIzzo2010}, so honesty rests on the ex-post cost of cheating rather than on signal production cost\citep{Hurd1995, Lachmann2001, Szamado2011}. Self-grading in our model parallels this logic, with the zero-cost signal becoming credible through the reputational penalty for misrepresentation. However, biological enforcement relies on direct dyadic interaction, confining the mechanism to small groups\citep{MaynardSmithHarper2003}. Online reputation systems overcome this scale limit by aggregating past evaluations into public scores, turning one-to-one enforcement into an information infrastructure that serves millions of anonymous participants. Thus, online reputation emerges as a social technology that renders cheap talk credible at scale\citep{Tchernichovski2019}.

The model identifies a lever and a limit for platform governance. The lever is the penalty coefficient $\phi$. The reputation-collateral bound $\bar{V}(\rho)$ rises with $\phi$, which platforms control through monitoring intensity, feedback protection, and the visibility of negative feedback\citep{Tadelis2016, BoltonGreinerOckenfels2013, OstromGardnerWalker1994}. A larger $\phi$ expands the Self regime (Fig.~\ref{fig:4}), enabling more sellers to establish credibility through reputation alone. eBay's persistent public scores and feedback protection guarantees exemplify such measures and are consistent with the wide Self-dominant region observed in our data. Even when platforms set $\phi$ high, however, reputation collateral has a structural limit. The dynamic analysis in Results suggests that, without certification, sellers who enter with little reputation have little collateral to offer and remain in the low-trust trap, a barrier to entry. TP grading can open an escape path, along which sellers move from certification to reputation-based credibility, consistent with prior evidence that certification is most valuable for sellers who lack an established reputation\citep{ElfenbeinFismanMcManus2015, DewallyEderington2006}. The model therefore suggests a developmental view of trust. Understanding both mechanisms is central to effective platform governance\citep{ChibaOkabePlotkin2026}.

\subsection*{Limitations and future directions}

Our study has several limitations. First, its cross-sectional design limits causal identification. Separating the channels behind the pricing interaction would require exogenous variation in signal availability. Second, the same design leaves the reputation dynamics untested. The low-trust trap and the escape path through certification remain model predictions, and tracing them would require longitudinal records of individual sellers. Third, we study a single collectibles market with a mature grading infrastructure. Whether the pattern holds where voluntary disclosure operates without comparable certification, as in used-car listings\citep{Lewis2011}, remains unanswered.

Taken together, our results indicate a division of labor between the two trust mechanisms in a single market. An established reputation makes a costless claim credible up to an item value that rises with that reputation. Institutional certification takes over above it. The model gives this division a general form. It reduces reputation and institutions to two types of costs, an ex-post penalty and an ex-ante fee. Platforms and societies can therefore be compared within a single frame. One natural comparison is between the United States and Japan. The two countries have been argued to differ in their relative reliance on institutions and reputation\citep{Yamagishi2011, Aoki2001}. The same pattern appears in the automotive supply chain, where trust is backed by written contracts in the United States but by trading history in Japan\citep{SakoHelper1998, Aoki1988}. This difference of degree parallels the long-standing distinction between public and private order\citep{Greif1994, Greif2006, GreifTabellini2017, GreifMokyrTabellini2025}. Thus, by bringing reputation and institutions into a common framework, this study advances our understanding of how the two mechanisms sustain trust among strangers.

\section*{Methods}

\subsection*{Data and variables}

We analyzed fixed-price (Buy It Now) listings of Pokémon trading cards on the eBay US marketplace, collected via the eBay API between June 1 and December 15, 2025. Although listings were restricted to the US marketplace on the buyer side, sellers participated internationally: 59.1\% of listings from the United States, 20.3\% from Japan, and the remaining 20.6\% from 67 countries. Each listed price was matched to an external market reference price from the PriceCharting REST API (\url{https://www.pricecharting.com/}), a publicly available price-tracking service for the trading card market, and the log price ratio was computed as the log of the listed-to-market price ratio. After removing outliers from the distribution of log price ratios using the interquartile range method (Supplementary Note~2), the final sample comprised $N = 960{,}029$ listings from 60{,}386 unique sellers. This study was approved by the ethics committee of the Graduate School of Humanities and Sociology, University of Tokyo (UTSP-25004, May 21, 2025). As the study used only publicly available marketplace data with no direct interaction with human participants, informed consent was not required.

The analyses involved three sets of variables: an outcome measure, signal categories, and seller reputation. (i) The outcome variable is the log price ratio $\log(P_i / V_i)$, where $P_i$ is the observed listing price and $V_i$ is the market reference price from PriceCharting. For TP-graded items, the grade-specific market price was used as $V_i$, so that the ratio captures pricing relative to the market price of an item with specific grading. As the market price fluctuates from day to day, we used the market price on the date of sale for sold items. For unsold listings, we used the most recent available market price at data collection as an approximation of the market price during the listing window to ensure that the ratio reflected the current market condition. For Japanese-language cards, we used a separate set of market prices in PriceCharting to account for language-specific market conditions. (ii) We distinguished between three signaling strategies: TP grading, self-grading, and none of these. Items in the TP-graded category (3.8\% of the listings) were authenticated and graded by the Professional Sports Authenticator (PSA), which assigns grades on a 10-point scale from PSA 1 (Poor) to PSA 10 (Gem Mint), at a per-card fee paid by the seller. As cards graded below PSA 7 rarely circulate in the secondary market, we restricted the TP-graded category to listings displaying PSA 7 (Near Mint) or higher as a proxy for high-grade certified items. For the cards in the self-graded category (33.3\% of the listings), sellers attached condition descriptors (e.g., ``gem mint,'' ``near mint'') to the items at no cost without obtaining PSA certification. These listings were identified through hierarchical regular-expression matching (see Supplementary Note~2 for the classification algorithm). The remaining listings were associated with no quality signal (None; 62.9\% of the listings). (iii) We constructed a seller reputation score from two eBay feedback metrics: the positive feedback percentage (the share of ratings that are positive) and the feedback score (net count of positive minus negative ratings). The percentage was logit-transformed and the score was log-transformed ($\log(\text{score} + 1)$); both were standardized before the PCA. The first principal component (78.5\% of the variance) served as the reputation score. Additional control variables included a Japanese-card indicator, a return-acceptance policy, and listing duration. Throughout the analyses, log denotes the natural logarithm.

\subsection*{Hierarchical Bayesian model}

We specified a hierarchical Bayesian model with three simultaneous equations: a multinomial logit for signal selection (TP-graded, self-graded, or None, with None as the reference category), a normal linear equation for the log price ratio, and a Bernoulli logit for purchase outcomes. Correlated seller-level random effects linked the three equations. In the signal-selection equation, item value was operationalized as the log of the ungraded reference price for all listings, including TP-graded items, on the assumption that the ungraded price serves as an exogenous factor in the signal choice. The equation also includes a reputation $\times$ item value interaction. We included additional covariates selectively, namely the Japanese-card indicator in all three equations, return-acceptance policy in the pricing and purchase equations, and log listing duration in the purchase equation only. We fitted the model in Stan using the No-U-Turn Sampler with eight chains, each run for 1,000 post-warmup iterations and thinned by a factor of two, for a total of 4,000 posterior draws (see Supplementary Notes~3 and 4 for the complete specification and Supplementary Table~1 for convergence diagnostics).

As only 13.7\% of the listings resulted in a sale, the observed transaction prices were subject to selection bias. To address this, we used the posterior draws from the Stan estimation to compute sold-conditional expected prices through posterior predictive simulation. For each signal type $k$ and reputation score, we integrated the predicted prices over the empirical covariate distribution, yielding $E[\text{price} \mid \text{reputation}, s = k, \text{sold} = 1]$. We drew 100,000 covariate vectors from the empirical distribution and combined them with 4,000 posterior draws, for a total of $4 \times 10^{8}$ simulations per signal type. We computed the AMEs of reputation as the numerical derivatives of these predictions with respect to the reputation score, averaged over the empirical distribution of reputation.

\subsection*{Signaling game}

We formulate a signaling game that formalizes how the two trust mechanisms interact. A seller privately observes product quality $\theta \in \{H, L\}$, where $P(\theta = H) = \pi_0$, and chooses a signal $s \in \{\text{TP}, \text{Self}, \text{None}\}$. Buyers observe $s$ and the seller's publicly available reputation $\rho \in [0, 1]$ (corresponding to the reputation composite defined above), and form a posterior belief $\hat{p}_s = P(H \mid s, \rho)$. We assumed that buyers were risk neutral and willing to pay up to their expected value of the item. Under the fixed-price format, the seller posts a take-it-or-leave-it price, and a buyer purchases the item whenever the posted price does not exceed the buyer's expected value. Empirically, listing prices skew toward high price premiums, while sold prices concentrate near the market price (Fig.~\ref{fig:1}B), consistent with this posted-price structure. The seller therefore posts the highest price a buyer will accept, so the equilibrium price for signal $s$ is $\hat{p}_s V$, where $V > 0$ is the full-information value of a high-quality item (a low-quality item has value zero). The signal None is treated as choosing the third option, namely listing an item without claiming any quality. It trades at the exogenous outside-option price $\pi_0 V$, independent of the seller's signal choice (see Supplementary Note~5 for the formal statement). Formally, the signaling cost $c(s, \theta, \rho)$ takes the following values across seller types and signals:

\begin{equation*}
c(s, \theta, \rho) = \begin{cases} c_{\text{TP}} & \text{if } s = \text{TP},\; \theta = H \\ +\infty & \text{if } s = \text{TP},\; \theta = L \\ 0 & \text{if } s = \text{Self},\; \theta = H \\ \phi\rho & \text{if } s = \text{Self},\; \theta = L \\ 0 & \text{if } s = \text{None} \end{cases}
\end{equation*}
where $c_{\text{TP}} > 0$ is the certification cost (ex-ante cost) and $\phi > 0$ is the penalty coefficient, which determines the penalty for dishonest self-grading, $\phi\rho$, the reputational cost. This is an ex-post cost because dishonest self-grading, once detected, triggers negative feedback, reducing the seller's reputation and diminishing future payoffs\citep{Sobel1985}. Therefore, sellers with greater reputational capital face a stronger deterrent. The key asymmetry is that $L$-type sellers cannot obtain third-party certification ($c = +\infty$), while their self-grading cost scales with reputation rather than with $V$. The solution concept is the PBE. Each seller maximizes the expected payoff given beliefs, and beliefs on the equilibrium path satisfy Bayes' rule. When multiple PBE arise, we select among them with the D1 criterion\citep{ChoKreps1987, BanksSobel1987, ChoSobel1990}.

We consider two candidate separating equilibria. In the \textit{Self-separating equilibrium}, $H$-type sellers choose Self and $L$-type sellers choose None. Bayes' rule gives the on-path belief $\hat{p}_{\text{Self}} = 1$, the price at None is fixed at $\pi_0 V$ by assumption, and the profile is a PBE if and only if $V \leq \bar{V}(\rho)$, the reputation-collateral bound (Eq.~\ref{eq:vbar}). In the \textit{TP-separating equilibrium}, $H$-type sellers choose TP and $L$-type sellers choose None. Because $L$-type sellers cannot obtain certification, Bayes' rule gives $\hat{p}_{\text{TP}} = 1$, and the profile is a PBE if and only if the TP payoff $V - c_{\text{TP}}$ is at least the None payoff $\pi_0 V$, equivalent to $V \geq \underline{V}$ (Eq.~\ref{eq:vunderline}). For $V > \bar{V}(\rho)$, the Self-separating equilibrium does not exist, and the remaining candidates are the TP-separating equilibrium and a semi-separating profile in which $H$-type sellers choose Self while $L$-type sellers mix between Self and None. For any positive reputation, no other profile, including pooling on None, is an equilibrium under D1 (Supplementary Note~7).

In the TP-separating equilibrium, Self is off the equilibrium path, Bayes' rule does not determine the belief $\hat{p}_{\text{Self}}^{\text{off}}$, and multiple PBE arise. For each type, the set of beliefs under which a deviation to Self improves on the equilibrium payoff is an interval with upper endpoint one whenever it is nonempty, because signaling costs do not depend on the buyer's posterior and both deviation payoffs increase in the belief with the common slope $V$. The intervals begin at $\bar{p}_H = 1 - c_{\text{TP}}/V$ for the $H$-type and $\bar{p}_L = \pi_0 + \phi\rho/V$ for the $L$-type, so the D1 comparison reduces to the ordering of the two thresholds (Supplementary Note~7). When $\bar{p}_L > \bar{p}_H$, the criterion sets $\hat{p}_{\text{Self}}^{\text{off}} = 1$. When $\bar{p}_H > \bar{p}_L$, it sets $\hat{p}_{\text{Self}}^{\text{off}} = 0$. The thresholds are equal at $V = (c_{\text{TP}} + \phi\rho)/(1 - \pi_0) = \bar{V}(\rho) + \underline{V}$, the certification boundary. Because $\bar{p}_H$ increases in $V$ while $\bar{p}_L$ decreases, $\bar{p}_L > \bar{p}_H$ holds exactly for $V$ below this boundary.

Under the assigned beliefs, the TP-separating equilibrium survives only above the certification boundary. Below it, the belief $\hat{p}_{\text{Self}}^{\text{off}} = 1$ makes the deviation to Self profitable for the $H$-type. In the semi-separating profile, the $L$-type's indifference between Self and None, $\hat{p}_{\text{Self}} V - \phi\rho = \pi_0 V$, pins the on-path belief at $\hat{p}_{\text{Self}} = \pi_0 + \phi\rho/V$, which lies strictly below one for $V > \bar{V}(\rho)$, so a self-claim is only partially credible. This profile exists exactly for $\bar{V}(\rho) \leq V \leq \bar{V}(\rho) + \underline{V}$. The equilibrium is therefore Self-separating for $V < \bar{V}(\rho)$, semi-separating between the two boundaries, and TP-separating above the certification boundary, and for any positive reputation it is unique off the boundaries (Supplementary Note~7). The two parallel boundaries partition the $(\rho, V)$ space into the Self, Semi-separating, and TP regimes described in Results (Fig.~\ref{fig:4}). The baseline parameter values are $\pi_0 = 0.3$, $\phi = 0.5$, and $c_{\text{TP}} = 0.2$ (Table~\ref{tab:1} and Supplementary Note~9; see Supplementary Note~6 for comparative statics).

\subsection*{Reputation dynamics}

Consider a seller for whom certification is unavailable, so that Self and None are the only signals. The certification boundary comes from the $H$-type deviation to TP, so for this seller the semi-separating equilibrium holds at every $V > \bar{V}(\rho)$. In that equilibrium, $H$-type sellers choose Self, and $L$-type sellers choose Self with the probability $\sigma_L(\rho)$ that keeps them indifferent between Self and None. Substituting the on-path belief $\hat{p}_{\text{Self}} = \pi_0 + \phi\rho/V$ into Bayes' rule gives
\begin{equation*}
\sigma_L(\rho) = \frac{\pi_0\,(1 - \pi_0 - \phi\rho/V)}{(1 - \pi_0)(\pi_0 + \phi\rho/V)} .
\end{equation*}

Fixing the item value $V$ leaves a single equation for $\rho$. Trade on the $H$-type share $\pi_0$ of the seller's listings raises reputation. On the remaining share, misrepresentation occurs with probability $\sigma_L(\rho)$ and lowers it by the same amount, while listings that carry no claim leave it unchanged, so a single rate $\alpha > 0$ governs both directions. The law of motion is therefore
\begin{equation*}
\dot{\rho} = \alpha\,[\,\pi_0 - (1 - \pi_0)\,\sigma_L(\rho)\,] .
\end{equation*}

The law of motion has at most one interior rest point, $\rho^{*} = V(1/2 - \pi_0)/\phi$, and that rest point is unstable (see Supplementary Note~8 for the derivation and the stability analysis). It is interior when high-quality listings are the minority, $\pi_0 < 1/2$, and when $V(1/2 - \pi_0) < \phi$. At the baseline values the second condition is slack, so the minority condition alone decides whether the rest point exists. Absent certification, reputation starting below $\rho^{*}$ is absorbed at $\rho = 0$.

With certification available, the $H$-type chooses TP below the switching reputation $\rho_s = [(1 - \pi_0)V - c_{\text{TP}}]/\phi$, which is interior exactly for $\underline{V} < V < (\phi + c_{\text{TP}})/(1 - \pi_0)$, and below it reputation grows at the full rate $\pi_0\alpha$. At $\rho_s$ misrepresentation resumes and the accumulation rate falls. Whether that drop stalls reputation turns only on the sign of $V - 2c_{\text{TP}}$ (Supplementary Note~8). When $V > 2c_{\text{TP}}$ reputation accumulates at every level and no trap forms. When $V < 2c_{\text{TP}}$ the rest point lies above the switch. Reputation below $\rho^{*}$ then converges on $\rho_s$ from either side, so the trap is the switch itself rather than a rest point of the law of motion. Reputation above $\rho^{*}$ recovers the full rate.

\section*{Data availability}

The processed data needed to reproduce the figures are available at the Open Science Framework\cite{KidoOhtsubo2026_OSF} (\url{https://doi.org/10.17605/OSF.IO/6HJFK}). Raw listing data were obtained from the eBay API and raw market-price data from the PriceCharting API. A full description of the data-collection procedure is provided in Supplementary Note~1, to enable independent re-collection.

\section*{Code availability}

All analysis code needed to reproduce the figures is available at the Open Science Framework\cite{KidoOhtsubo2026_OSF} (\url{https://doi.org/10.17605/OSF.IO/6HJFK}).

\bibliographystyle{unsrtnat}
\bibliography{references}

\section*{Acknowledgements}

We thank Hirokazu Hatta for valuable discussions and advice throughout this project. We thank Kiyoshi Izumi and the members of his laboratory for their helpful comments. We also thank the participants in our laboratory seminar for their feedback. We thank Scribendi (\url{https://www.scribendi.com/}) for English-language editing. This work was supported by a Grant-in-Aid for JSPS Fellows (JSPS KAKENHI Grant Number JP25KJ0080 to Y.K.) and a Grant-in-Aid for Transformative Research Areas (A) (MEXT KAKENHI Grant Number 24H02200 to Y.O.).

\section*{Author contributions}

Y.K. and Y.O. designed research; Y.K. performed research and analyzed data; and Y.K. and Y.O. wrote the paper.

\section*{Competing interests}

The authors declare no competing interests.

\section*{Additional information}

\noindent\textbf{Supplementary information} is available for this paper.

\noindent\textbf{Correspondence and requests for materials} should be addressed to Yuta Kido.

\clearpage
\appendix
\setcounter{figure}{0}\renewcommand{\thefigure}{S\arabic{figure}}
\setcounter{table}{0}\renewcommand{\thetable}{S\arabic{table}}
\setcounter{equation}{0}\renewcommand{\theequation}{S\arabic{equation}}
\setcounter{section}{0}
\section*{Supplementary Materials}
\bigskip
\noindent\textbf{Organization of this Supplementary Information:}
\begin{enumerate}[nosep,leftmargin=2em]
\item \textbf{Supplementary Notes} (pp.~\pageref{sec:si-methods}--\pageref{sec:si-methods-end}). Supplementary Notes 1 to 4 describe the data pipeline, the variables, the full specification of the hierarchical Bayesian model, and its estimation. Supplementary Notes 5 to 9 state the assumptions of the signaling game, derive the Self-separating, semi-separating, and third-party (TP) separating equilibria with the D1 refinement, analyze the reputation dynamics, and give the parameter values used in the figures.
\item \textbf{Supplementary Figures and Tables} (pp.~\pageref{sec:si-results-top}--\pageref{sec:si-results-end}). Descriptive price analysis (Supplementary Fig.~1), seller volume (Supplementary Fig.~2), revenue and margin (Supplementary Fig.~3), the continuous regime boundary in reputation--value space (Supplementary Fig.~4), convergence diagnostics (Supplementary Table~1), model fit (Supplementary Fig.~5), full coefficient estimates (Supplementary Table~2), and signal composition (Supplementary Fig.~6).
\end{enumerate}

\clearpage

\SItext

\refstepcounter{section}\section*{Empirical Methods}\label{sec:si-empirical}

\subsection{Data Pipeline}\label{subsec:emp-A}

Listing data were collected daily from the eBay US marketplace through the Browse API (version 1), and market reference prices were collected daily from the PriceCharting REST API, which reports historical card prices by grade. Collection spanned 198 days (June 1 to December 15, 2025) and was restricted to the Pokémon trading card category and to listings offered to buyers on the US marketplace, which holds shipping costs and transaction conditions constant across listings. Sellers were located in 69 countries. The United States accounted for 59.1\% of listings, Japan for 20.3\%, Canada for 6.7\%, the United Kingdom for 6.2\%, Australia for 4.4\%, France for 1.6\%, and 63 further countries for a combined 1.7\%.

Listings were retrieved with the single query ``pokemon card'' in eBay's category for individual trading cards, restricted to fixed-price listings. New listings were collected every six hours, and listings that accept returns were collected in a separate run. Market prices were downloaded once a day as a full file of about 74,000 card--grade entries, and Japanese-language cards were identified by the ``Japanese'' tag in the product name.

Each listing was matched to a PriceCharting entry in four steps. Listings for bulk lots were excluded, the card number and set size were extracted from the title by regular expression, the set was identified from the title, and the card number was verified before the listing was joined to the entry for that set, number, and grade.

\subsection{Variable Construction}\label{subsec:emp-B}

\subsubsection{Price premium}

The price premium is the log ratio of the listing price to the market reference price,

\begin{equation}
\widetilde{p}_i = \log(P_i / V_i) \label{eq:s1}
\end{equation}

where $P_i$ is the listing price in USD and $V_i$ is the PriceCharting reference price for the corresponding card--grade combination. For sold items, $V_i$ is the market price on the date of sale; for unsold listings, it is the most recent market price.

Outliers were removed with Tukey's interquartile-range rule applied to the log price ratio, excluding observations below $Q_1 - 1.5 \times \mathrm{IQR}$ or above $Q_3 + 1.5 \times \mathrm{IQR}$. This excluded 23,421 observations (2.3\%) and reduced the sample from 1,018,321 to 994,900 listings. Restriction to complete cases gave the analysis sample of $N = 960{,}029$.

\subsubsection{Seller reputation score}

eBay reports two seller reputation metrics, the Feedback Score (positive minus negative ratings, cumulative over the seller's history) and the Positive Feedback Percentage (the share of positive ratings over the preceding 12 months, from 0 to 100\%). We aggregated both metrics to seller-level means, applied a logit transformation to the percentage and a log transformation to the score to address ceiling effects and skewness, standardized both, and took the first principal component as the composite reputation score.

The first component explained 78.5\% of the variance, and the score had $M = 0.013$ and $SD = 1.188$ across the 60,386 sellers in the analysis sample. The two inputs differed in distribution. The Positive Feedback Percentage averaged 86.3\% ($SD = 33.6\%$). The Feedback Score ranged from $-3$ to 672,427 and was bimodal on the log scale; the 23 sellers (0.04\%) with negative scores were set to zero before the log transformation.

\subsubsection{Signal categories}

Each listing carries one of three signals: None (no quality signal; 62.9\% of listings), Self-graded (a self-claimed condition description; 33.3\%), and TP-graded (third-party certification; 3.8\%).

TP-graded listings are cards certified by Professional Sports Authenticator (PSA), the largest TP grading service for Pokémon trading cards by transaction volume, at grades 10 (Gem Mint), 9 (Mint), 8 (Near Mint--Mint), and 7 (Near Mint); the criteria for centering, surface, and corners become stricter at each step. A listing was classified as TP-graded when eBay's condition field marked it as graded and its title named a PSA grade of 7 to 10.

Self-graded listings are ungraded cards whose titles contain condition descriptors equivalent to these PSA grades. Descriptors were matched in a fixed order, and the first match was taken: PSA 10 (``gem mint'', ``perfect'', or ``flawless''), PSA 8 (``near mint'' or ``NM'' followed by ``mint'' or ``MT''), PSA 7 (``near mint'' or ``NM'' not followed by ``mint'' or ``MT''), and PSA 9 (``mint'' or ``MT'' not followed by ``gem''). Titles without a match were classified as None.

\subsubsection{Control variables}

Three control variables enter the model: an indicator for Japanese-language cards (36.5\% of listings), an indicator for whether returns are accepted (48.8\%), and the number of days since the listing was created ($M = 258.6$, $SD = 355.2$). They control for card-level heterogeneity, seller listing policy, and the duration of unsold listings.

\subsection{Full Model Specification}\label{subsec:emp-C}

The hierarchical Bayesian model (HBM) fitted to the analysis sample consists of three simultaneously estimated equations, a multinomial logit for signal choice (None, Self-graded, TP-graded), a normal linear equation for the log price ratio, and a Bernoulli logit for the sale outcome, linked by correlated seller-level random effects. Joint estimation with correlated random effects accounts for the seller's endogenous signal choice and for selection in observed prices.

The signal choice follows a multinomial logit,

\begin{equation}
P(\text{cert}_i = k) = \frac{\exp(\eta_i^{(k)})}{\sum_{k'} \exp(\eta_i^{(k')})} \label{eq:s2}
\end{equation}

with $\eta_i^{(\text{None})} = 0$ as the reference category. For $k \in \{\text{Self-graded}, \text{TP-graded}\}$, the linear predictor is

\begin{equation}
\eta_i^{(k)} = \alpha^{(k)} + \gamma_\rho^{(k)} \cdot \text{reputation}_j + \delta^{(k)} \cdot \log(V_i^{\text{ung}}) + \psi^{(k)} \cdot \text{reputation}_j \cdot \log(V_i^{\text{ung}}) + \zeta^{(k)} \cdot \text{JP}_i + \lambda^{(k)} \cdot u_j^{signal} \label{eq:s3}
\end{equation}

where $\text{reputation}_j$ is the seller reputation score, $\log V_i^{\text{ung}}$ is the log ungraded market price (distinct from the grade-specific $V_i$ of Supplementary Note~2), $\text{JP}_i$ is the Japanese-card indicator, and $u_j^{signal}$ is a seller random effect with factor loadings $\lambda^{\text{Self}} = 1$ (fixed) and $\lambda^{\text{TP}}$ (estimated).

The log price ratio is normally distributed,

\begin{equation}
\log(P_i / V_i) \sim \mathcal{N}(\mu_i^{price}, \sigma_{price}^2) \label{eq:s4}
\end{equation}

with conditional mean

\begin{equation}
\mu_i^{price} = \alpha^{price} + \mathbf{X}_i^{price} \boldsymbol{\beta}^{price} + \beta_{signal}^{\text{Self}} \cdot \mathbf{1}_{\text{Self}} + \beta_{signal}^{\text{TP}} \cdot \mathbf{1}_{\text{TP}} + u_j^{price} \label{eq:s5}
\end{equation}

The design matrix $\mathbf{X}^{price}$ contains five covariates: the seller reputation score, the Japanese-card indicator, reputation $\times$ self-graded, reputation $\times$ TP-graded, and the returns-accepted indicator. The interaction coefficients $\beta_3$ and $\beta_4$ are the parameters of primary interest; their posterior estimates are reported in Supplementary Table~\ref{tab:coef}.

The sale outcome is Bernoulli with a logit link,

\begin{equation}
\text{sold}_i \sim \text{Bernoulli}(\text{logit}^{-1}(\eta_i^{sold})) \label{eq:s6}
\end{equation}

with linear predictor

\begin{equation}
\eta_i^{sold} = \alpha^{sold} + \mathbf{X}_i^{sold} \boldsymbol{\beta}^{sold} + u_j^{sold} \label{eq:s7}
\end{equation}

The design matrix $\mathbf{X}^{sold}$ contains nine covariates: the seller reputation score, the self-graded indicator, the TP-graded indicator, the Japanese-card indicator, the returns-accepted indicator, the log price ratio, the log listing duration, reputation $\times$ self-graded, and reputation $\times$ TP-graded.

The three seller random effects are drawn jointly in a non-centered parameterization, which improves sampling efficiency for the 60,386 sellers in the analysis sample,

\begin{equation}
\mathbf{u}_j = \begin{pmatrix} u_j^{price} \\ u_j^{sold} \\ u_j^{signal} \end{pmatrix} = L \cdot \mathbf{z}_j, \quad \mathbf{z}_j \sim \mathcal{N}(\mathbf{0}, I_3) \label{eq:s8}
\end{equation}

where $L = \text{diag}(\boldsymbol{\sigma}_u) \cdot L_\Omega$ is the Cholesky factor of the covariance matrix, $\boldsymbol{\sigma}_u$ contains the standard deviations, and $L_\Omega$ is the Cholesky factor of the correlation matrix.

All parameters receive weakly informative priors. In the signal equation, the intercepts have $\alpha^{(k)} \sim \mathcal{N}(0, 2)$, the slopes $\gamma_\rho^{(k)}, \delta^{(k)}, \zeta^{(k)} \sim \mathcal{N}(0, 1)$, the interactions $\psi^{(k)} \sim \mathcal{N}(0, 0.5)$, and the TP factor loading $\lambda^{\text{TP}} \sim \mathcal{N}(1, 1)$. In the price equation, $\alpha^{price} \sim \mathcal{N}(0, 2)$, $\boldsymbol{\beta}^{price} \sim \mathcal{N}(0, 2)$ ($K = 5$), $\beta_{signal}^{\text{Self}}, \beta_{signal}^{\text{TP}} \sim \mathcal{N}(0, 1)$, and $\sigma_{price} \sim \text{Exponential}(1)$. In the sale equation, $\alpha^{sold} \sim \mathcal{N}(0, 2)$ and $\boldsymbol{\beta}^{sold} \sim \mathcal{N}(0, 2)$ ($K = 9$). For the random effects, $\mathbf{z}_j \sim \mathcal{N}(\mathbf{0}, I_3)$, $\sigma_u \sim \text{Exponential}(1)$ for each of the three elements, and $L_\Omega \sim \text{LKJ}(2)$.

\subsection{Estimation}\label{subsec:emp-D}

We estimated the model by Markov chain Monte Carlo (MCMC) in Stan, using the No-U-Turn Sampler (NUTS), a variant of Hamiltonian Monte Carlo. Eight chains were run for 1,000 warm-up and 1,000 sampling iterations each, with thinning by a factor of 2, a step-size adaptation target of 0.95, and a maximum tree depth of 12, giving 4,000 posterior draws. Convergence diagnostics for all reported quantities are given in Supplementary Table~\ref{tab:convergence} and predictive fit in Supplementary Fig.~\ref{fig:S5}.

\FloatBarrier
\clearpage
\refstepcounter{section}\section*{Theoretical Model}\label{sec:si-theory}

\subsection{Model Assumptions}\label{subsec:th-A}

The model is a signaling game between a seller (Sender) and buyers (Receivers). The seller privately observes product quality and chooses a signal $s \in \{\text{TP}, \text{Self}, \text{None}\}$, where TP denotes third-party grading; buyers then form beliefs and decide whether to buy at the posted price. Six assumptions define the environment, Supplementary Notes~6 and 7 derive the equilibria from them, and Supplementary Note~8 derives the reputation dynamics.

\begin{assumption}[Quality and item value]
Product quality is binary, $\theta \in \{H, L\}$, with item values $v(H) = V > 0$ and $v(L) = 0$. The prior probability of high quality is $P(\theta = H) = \pi_0 \in (0, 1)$.
\end{assumption}

\begin{assumption}[Information structure]
The seller's type $\theta$ is private information. The seller's reputation $\rho$ is public. Buyers observe the pair $(s, \rho)$ and form beliefs $\hat{p}_s$.
\end{assumption}

\begin{assumption}[Cost structure]
Signaling costs depend on type and reputation,
\begin{equation}
c(s, \theta, \rho) = \begin{cases} c_{\text{TP}} & s = \text{TP},\ \theta = H \\ +\infty & s = \text{TP},\ \theta = L \\ 0 & s = \text{Self},\ \theta = H \\ \phi\rho & s = \text{Self},\ \theta = L \\ 0 & s = \text{None} \end{cases} \label{eq:s10}
\end{equation}
where $c_{\text{TP}} > 0$ is the TP grading fee, $\phi > 0$ is the penalty coefficient for detected misrepresentation, and $\rho \in [0,1]$ is the seller's reputation, based on cumulative transaction history.

Equation~\ref{eq:s10} embodies four further assumptions: $L$-types cannot pass TP grading, truthful self-grading is costless, a false self-claim triggers a penalty proportional to reputation, and detection is certain ($q = 1$). Under imperfect detection with $0 < q < 1$, the reputation-collateral bound $\bar{V}(\rho)$ of Supplementary Note~6 scales by $q$, and the regime partition and comparative statics are otherwise unchanged.
\end{assumption}

\begin{assumption}[Outside option for None]
The signal None denotes non-participation in the signaling game. It trades at the exogenous outside-option price $\pi_0 V$ and is not subject to Bayesian updating.
\end{assumption}

\begin{assumption}[Posted prices]
Buyers are risk-neutral and willing to pay up to their expected value of the item. Given signal $s$ and reputation $\rho$, buyers hold the belief $\hat{p}_s = P(H \mid s, \rho)$, and the seller posts the highest price a buyer accepts, so the equilibrium price is $p(s, \rho) = \hat{p}_s V$ for $s \in \{\text{TP}, \text{Self}\}$. The price at None is exogenous (Assumption 4).
\end{assumption}

\begin{assumption}[Seller payoff]
The seller is risk-neutral, and the payoff is the price received minus the signaling cost,
\begin{equation}
\Pi_{\text{seller}}(\theta, s, \rho) = p(s, \rho) - c(s, \theta, \rho) \label{eq:s9}
\end{equation}
\end{assumption}

\subsubsection*{Payoffs}

Assumptions 1 to 6 give the following payoffs: $H \to \text{TP}$ yields $\hat{p}_{\text{TP}} V - c_{\text{TP}}$; $L \to \text{TP}$ is infeasible; $H \to \text{Self}$ yields $\hat{p}_{\text{Self}} V$; $L \to \text{Self}$ yields $\hat{p}_{\text{Self}} V - \phi\rho$; and either type at None receives $\pi_0 V$. The belief $\hat{p}_{\text{TP}} = 1$ follows from Assumption 3, whereas $\hat{p}_{\text{Self}}$ is determined in equilibrium (Supplementary Note~6).

\subsubsection*{Solution concept}

\begin{definition}[Strategy]
A seller's strategy maps type and reputation to a probability distribution over signals,
\[\sigma: \Theta \times [0, 1] \to \Delta(S),\]
where $\Theta = \{H, L\}$, $S = \{\text{TP}, \text{Self}, \text{None}\}$, and $\Delta(S)$ is the set of probability distributions over $S$.
\end{definition}

\begin{definition}[Belief system]
The buyer's belief system assigns to each signal--reputation pair a probability that the item is high quality,
\[\hat{p}: S \times [0, 1] \to [0, 1].\]
At TP and Self, $\hat{p}(s, \rho) = P(H \mid s, \rho)$ is a posterior; at None, $\hat{p}$ is fixed at the prior $\pi_0$ (Assumption 4).
\end{definition}

\begin{definition}[Equilibrium]
A strategy--belief pair $(\sigma^*, \hat{p}^*)$ is an equilibrium if two conditions hold.

(i) \textit{Sequential rationality}: for each $(\theta, \rho)$, $\sigma^*(\theta, \rho)$ places positive probability only on signals in
\[\arg\max_{s \in S} \left[ p(s, \rho) - c(s, \theta, \rho) \right],\]
where $p(s, \rho)$ is the price that Assumptions 4 and 5 assign to the belief $\hat{p}^*(s, \rho)$.

(ii) \textit{Bayes consistency}: the beliefs at Self and TP satisfy Bayes' rule whenever those signals are on the equilibrium path. None is exempt (Assumption 4).
\end{definition}

\begin{definition}[Information sets]
$I_{\text{TP}}$ is a singleton with $\hat{p}_{\text{TP}} = 1$, because $L$-type sellers cannot obtain TP grading (Assumption 3). $I_{\text{Self}}$ may contain both types, so $\hat{p}_{\text{Self}} \in [0, 1]$ is determined in equilibrium. $I_{\text{None}}$ carries the prior, $\hat{p}_{\text{None}} = \pi_0$ (Assumption 4).
\end{definition}

\subsection{Self-Separating Equilibrium}\label{subsec:th-B}

The payoffs in Supplementary Note~5 leave $\hat{p}_{\text{Self}}$ undetermined. Here we show that the profile in which $H$-type sellers choose Self and $L$-type sellers choose None,
\[\sigma^*(H, \rho) = \text{Self}, \quad \sigma^*(L, \rho) = \text{None},\]
is a perfect Bayesian equilibrium (PBE) whenever $V \leq \bar{V}(\rho)$, and we derive $\bar{V}(\rho)$ from the $L$-type's incentive constraint. The profile is suggested by the cost asymmetry in Assumption 3: Self is costless for the $H$-type and costs $\phi\rho$ for the $L$-type.

\subsubsection*{Beliefs and payoffs}

Because only $H$-type sellers choose Self on the equilibrium path, Bayes' rule gives $\hat{p}_{\text{Self}} = P(H \mid \text{Self}) = 1$. The belief at TP is $\hat{p}_{\text{TP}} = 1$ (Assumption 3), and the price at None is $\pi_0 V$ (Assumption 4). On the path, $H \to \text{Self}$ yields $V$ and $L \to \text{None}$ yields $\pi_0 V$. The deviations yield $V - \phi\rho$ for $L \to \text{Self}$, $\pi_0 V$ for $H \to \text{None}$, and $V - c_{\text{TP}}$ for $H \to \text{TP}$.

\subsubsection*{Incentive compatibility}

The $H$-type prefers Self to None because $V > \pi_0 V$ for every $\pi_0 \in (0, 1)$, and prefers Self to TP because $c_{\text{TP}} > 0$. Both conditions hold under Assumptions 1 and 3. The binding constraint is the $L$-type's. None is preferred to Self when $\pi_0 V > V - \phi\rho$, that is, when $V(1 - \pi_0) < \phi\rho$. Solving for $V$ defines the reputation-collateral bound,
\begin{equation}
V < \frac{\phi}{1 - \pi_0}\,\rho \equiv \bar{V}(\rho) \label{eq:bar-V}
\end{equation}
The constraint holds strictly for $V < \bar{V}(\rho)$ and fails for $V > \bar{V}(\rho)$, where the $L$-type deviates to Self. At $V = \bar{V}(\rho)$ the $L$-type is indifferent and None remains a best response. The equilibrium therefore exists exactly for $V \leq \bar{V}(\rho)$.

\begin{definition}[Reputation-collateral bound]
$\bar{V}(\rho) = \frac{\phi}{1-\pi_0}\,\rho$ is the highest item value at which a seller with reputation $\rho$ sustains credible self-grading through reputation alone. The slope $\phi/(1-\pi_0)$ is the additional item value sustained per unit of reputation. This bound is Eq.~1 in the main text.
\end{definition}

\begin{proposition}[Existence of the Self-separating equilibrium]\label{prop:self-existence}
The separating equilibrium $(H \to \text{Self},\ L \to \text{None})$ exists if and only if $V \leq \bar{V}(\rho)$. At $V = \bar{V}(\rho)$ the $L$-type is indifferent between None and Self, and the profile coincides with the semi-separating equilibrium of Supplementary Note~7 at $\sigma_L = 0$.
\end{proposition}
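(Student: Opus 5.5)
The plan is to verify the two conditions in the Equilibrium definition for the profile $\sigma^*(H,\rho)=\text{Self}$, $\sigma^*(L,\rho)=\text{None}$. Both directions of the biconditional will come from a single incentive inequality, after which the boundary case is read off directly.

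First, beliefs. Self is on the path and chosen only by $H$, so Bayes' rule forces $\hat{p}_{\text{Self}}=1$. TP is off the path, but the Information sets definition fixes $\hat{p}_{\text{TP}}=1$ because $L$ cannot reach that node. None is exempt from updating and carries $\pi_0$ by Assumption 4. Bayes consistency therefore holds, and the beliefs are uniquely pinned down; in particular, no off-path freedom at Self can rescue or break the profile.

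Second, sequential rationality, taken type by type using the payoffs listed in Supplementary Note~5.

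\begin{itemize}
\item For $H$: Self yields $V$, None yields $\pi_0 V<V$ since $\pi_0<1$, and TP yields $V-c_{\text{TP}}<V$ since $c_{\text{TP}}>0$. So Self is the unique best response for every $(\rho,V)$, and $H$'s incentives impose no constraint.
\item For $L$: TP is infeasible (cost $+\infty$), so the only comparison is None, yielding $\pi_0 V$, against Self, yielding $V-\phi\rho$. None is a best response iff $\pi_0V\ge V-\phi\rho$, i.e. $V\le \phi\rho/(1-\pi_0)=\bar V(\rho)$.
\end{itemize}

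This gives sufficiency. For necessity: if $V>\bar V(\rho)$, then $L$ strictly gains by deviating to Self at the Bayes-forced belief $1$, so the profile fails sequential rationality under any admissible belief system. This direction is the only delicate step. One has to stress that the belief at Self cannot be chosen differently, because Self is on the path; that is why existence fails outright rather than being sustainable by some other belief.

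Finally, the boundary $V=\bar V(\rho)$. There $\pi_0V=V-\phi\rho$, so $L$ is exactly indifferent, and any mixture $\sigma_L$ is a best response. In the semi-separating profile of Supplementary Note~7, $L$'s indifference pins $\hat p_{\text{Self}}=\pi_0+\phi\rho/V$, which equals $1$ at $V=\bar V(\rho)$. Bayes' rule with $\hat p_{\text{Self}}=1$ then forces $\sigma_L=0$ (the closed form for $\sigma_L(\rho)$ in Methods indeed vanishes when $\pi_0+\phi\rho/V=1$). So the semi-separating profile at this boundary is exactly the Self-separating profile, and the two families meet continuously there.
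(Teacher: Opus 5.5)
Your proposal is correct and follows the same route as the paper. Bayes' rule and the assumptions pin $\hat p_{\text{Self}}=1$, $\hat p_{\text{TP}}=1$ and the None price $\pi_0 V$; the $H$-type's constraints are slack because $\pi_0<1$ and $c_{\text{TP}}>0$; the $L$-type's constraint $\pi_0 V \ge V-\phi\rho$ gives $V\le\bar V(\rho)$ in both directions; and the boundary case follows from $\sigma_L=0$ in the semi-separating formula. Your point that Self is on-path, so necessity cannot be rescued by a different belief, is a useful clarification that the paper leaves implicit.
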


\subsubsection*{Comparative statics}

The reputation-collateral bound responds to the primitives as follows:
\begin{align}
\frac{\partial \bar{V}(\rho)}{\partial \rho}   &= \frac{\phi}{1 - \pi_0} > 0, \label{eq:dvbar-rho}\\
\frac{\partial \bar{V}(\rho)}{\partial \phi}   &= \frac{\rho}{1 - \pi_0} \geq 0, \label{eq:dvbar-phi}\\
\frac{\partial \bar{V}(\rho)}{\partial \pi_0}  &= \frac{\phi\rho}{(1 - \pi_0)^2} \geq 0. \label{eq:dvbar-pi0}
\end{align}
The first derivative is positive at every reputation, and the other two are positive for $\rho > 0$. Higher reputation (Eq.~\ref{eq:dvbar-rho}), a harsher penalty $\phi$ (Eq.~\ref{eq:dvbar-phi}), and a higher prior $\pi_0$ (Eq.~\ref{eq:dvbar-pi0}) each expand the Self regime, because more reputation collateral, stronger sanctions, and a smaller deception rent $(1-\pi_0)V$ all raise the bound.

\subsection{TP-Separating Equilibrium and Equilibrium Refinement}\label{subsec:th-C}

For $V \leq \bar{V}(\rho)$, Bayes' rule pins down the belief at Self (Supplementary Note~6) and Assumption 4 fixes the price at None. For $V > \bar{V}(\rho)$ the Self-separating equilibrium does not exist, and the candidate that uses TP grading as the quality signal makes Self an off-path signal whose belief Bayes' rule leaves undetermined. This note derives the TP-separating and semi-separating equilibria and selects among candidates with the D1 criterion\citep{ChoKreps1987, BanksSobel1987, ChoSobel1990}, which restricts off-path beliefs where the Intuitive Criterion does not.

\subsubsection*{Equilibrium candidate}

\[\sigma^*(H, \rho) = \text{TP}, \quad \sigma^*(L, \rho) = \text{None}\]

On the equilibrium path, $\hat{p}_{\text{TP}} = 1$ because $L$-type sellers cannot obtain TP grading (Assumption 3), and the price at None is $\pi_0 V$ (Assumption 4). Self is off the path, so under PBE the belief $\hat{p}_{\text{Self}}^{\text{off}} \in [0, 1]$ is unrestricted, and we resolve this indeterminacy below.

\subsubsection*{Deviation thresholds}

Whether a type gains from a deviation to Self depends on $\hat{p}_{\text{Self}}^{\text{off}}$. The $H$-type receives $V - c_{\text{TP}}$ under TP and $\hat{p}_{\text{Self}}^{\text{off}} V$ from Self, so it gains when
\[\hat{p}_{\text{Self}}^{\text{off}} > 1 - \frac{c_{\text{TP}}}{V}.\]
The $L$-type receives $\pi_0 V$ under None and $\hat{p}_{\text{Self}}^{\text{off}} V - \phi\rho$ from Self, so it gains when
\[\hat{p}_{\text{Self}}^{\text{off}} > \pi_0 + \frac{\phi\rho}{V}.\]
Write $\bar{p}_H = 1 - c_{\text{TP}}/V$ and $\bar{p}_L = \pi_0 + \phi\rho/V$ for the two thresholds. Because $c_{\text{TP}} > 0$, $\bar{p}_H < 1$, so sufficiently optimistic beliefs always make the $H$-type's deviation profitable.

\begin{lemma}[Reduction to threshold comparison]\label{lem:threshold}
Let Self be off the equilibrium path and $V \geq \underline{V}$, the range on which the TP-separating profile is a PBE (below). The price is strictly increasing in the belief (Assumption 5) and signaling costs do not depend on it (Assumption 3), so type $\theta$'s deviation payoff $\hat{p}_{\text{Self}}^{\text{off}} V - c(\text{Self}, \theta, \rho)$ is affine in $\hat{p}_{\text{Self}}^{\text{off}}$ with slope $V > 0$. The set of beliefs under which type $\theta$ strictly gains is $D_\theta = (\bar{p}_\theta, 1] \cap [0, 1]$, and the set under which it is indifferent is $D^0_\theta = \{\bar{p}_\theta\} \cap [0, 1]$. On this range $\bar{p}_H \geq \pi_0 > 0$ and $\bar{p}_L > \pi_0 > 0$, so $D_\theta \cup D^0_\theta = [\bar{p}_\theta, 1]$ whenever $\bar{p}_\theta \leq 1$ and is empty otherwise, and $D_H$ is nonempty because $\bar{p}_H < 1$. Consequently, $D_\theta \cup D^0_\theta \subsetneq D_{\theta'}$ holds if and only if $\bar{p}_\theta > \bar{p}_{\theta'}$.
\end{lemma}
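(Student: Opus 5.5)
The plan is a direct computation from the payoffs of Supplementary Note~5, in three steps: affine structure of the deviation payoffs, explicit form of the deviation sets, and reduction of strict set inclusion to the ordering of the thresholds.

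\textbf{Affine structure.} Fix $(\rho,V)$ with $V\geq\underline{V}$. Under the candidate profile, the equilibrium payoffs are $u_H^*=V-c_{\text{TP}}$ and $u_L^*=\pi_0 V$. By Assumption~5, a deviation to Self with off-path belief $b=\hat{p}_{\text{Self}}^{\text{off}}$ pays $bV-c(\text{Self},\theta,\rho)$. By Assumption~3 the cost term is $0$ or $\phi\rho$ and does not depend on $b$. Each deviation payoff is therefore affine in $b$ with the common slope $V>0$. Type $\theta$ strictly gains if and only if $bV-c(\text{Self},\theta,\rho)>u_\theta^*$. Solving gives $b>\bar p_\theta$, with $\bar p_H=1-c_{\text{TP}}/V$ and $\bar p_L=\pi_0+\phi\rho/V$. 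Intersecting with the admissible belief set $[0,1]$ yields $D_\theta=(\bar p_\theta,1]\cap[0,1]$, and replacing the strict inequality by equality gives $D^0_\theta=\{\bar p_\theta\}\cap[0,1]$.

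\textbf{Bounds on the thresholds.} The condition $V\geq\underline{V}=c_{\text{TP}}/(1-\pi_0)$ gives $c_{\text{TP}}/V\leq 1-\pi_0$, hence $\bar p_H\geq\pi_0>0$. Since $c_{\text{TP}}>0$, also $\bar p_H<1$, so $D_H$ is nonempty. For the $L$-type, $\bar p_L\geq\pi_0>0$, with strict inequality whenever $\rho>0$. Only $\bar p_\theta>0$ is needed, since it ensures the lower truncation at $0$ never binds. If $\bar p_\theta\leq1$, then $D_\theta\cup D^0_\theta=[\bar p_\theta,1]$. If $\bar p_\theta>1$, both sets are empty.

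\textbf{Inclusion versus threshold ordering.} I will show that $D_\theta\cup D^0_\theta\subsetneq D_{\theta'}$ holds if and only if $\bar p_\theta>\bar p_{\theta'}$, splitting into cases.

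For the \emph{if} direction, suppose $\bar p_\theta>\bar p_{\theta'}$. Then $\bar p_{\theta'}<\bar p_\theta$, and $\bar p_{\theta'}<1$ whenever the left-hand side is nonempty. If $\bar p_\theta\leq1$, then $[\bar p_\theta,1]\subset(\bar p_{\theta'},1]$, and the inclusion is strict because points in $(\bar p_{\theta'},\bar p_\theta)$ are missing from the left. If $\bar p_\theta>1$, the left side is empty. In that case I need $D_{\theta'}\neq\emptyset$, i.e.\ $\bar p_{\theta'}<1$. This holds automatically when $\theta'=H$. When $\theta'=L$ and $\theta=H$, the case $\bar p_H>1$ is excluded since $\bar p_H<1$.

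For the \emph{only if} direction, suppose $\bar p_\theta\leq\bar p_{\theta'}$. If $\bar p_\theta\leq1$, then $\bar p_\theta\in D_\theta\cup D^0_\theta$ but $\bar p_\theta\notin D_{\theta'}$, so the inclusion fails. If $\bar p_\theta>1$, then $\bar p_{\theta'}>1$ as well, so $D_{\theta'}$ is empty and strict inclusion of the empty set is impossible.

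\textbf{Main obstacle.} The only delicate point is these edge cases where a threshold exceeds one and one of the sets is empty. Recording $\bar p_H<1$ from $c_{\text{TP}}>0$ resolves them. Everything else is a one-line inequality manipulation.
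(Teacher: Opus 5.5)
Your proposal is correct and follows the paper's own argument, which is compressed into the statement of the lemma: affine deviation payoffs with common slope $V$, explicit intervals $D_\theta$ and $D^0_\theta$, and the reduction of strict containment to the ordering of $\bar p_H$ and $\bar p_L$. Your version is more careful than the paper's ``consequently'' in two places: you handle the empty-set cases ($\bar p_\theta>1$) explicitly, and you note that $\bar p_L\geq\pi_0$ is strict only for $\rho>0$, which is harmless because only positivity of the thresholds is needed.
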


\subsubsection*{PBE existence}

The $H$-type prefers TP to None when $V - c_{\text{TP}} > \pi_0 V$, that is, when
\begin{equation}
V > \frac{c_{\text{TP}}}{1 - \pi_0} \equiv \underline{V} \label{eq:underline-V}
\end{equation}
with indifference at $V = \underline{V}$. This bound involves no off-path belief and does not depend on $\rho$. It is the lowest item value at which the quality premium $(1 - \pi_0)V$ covers the fee $c_{\text{TP}}$, and it is Eq.~2 in the main text. The credibility of TP grading rests on the grading institution rather than on the seller's reputation. No type gains from a deviation to Self when $\hat{p}_{\text{Self}}^{\text{off}} \leq \min(\bar{p}_H, \bar{p}_L)$, so the TP-separating profile is a PBE for every $V \geq \underline{V}$, supported by any off-path belief in this range.

\subsubsection*{Refinement of the off-path belief}

Self is the only signal to which the refinement applies, because the belief at None is fixed at the prior (Assumption 4) and the belief at TP equals one (Assumption 3). The D1 criterion places zero probability on a type whose set of deviation-supporting beliefs is strictly contained in that of the other type. By Lemma~\ref{lem:threshold}, this comparison reduces to the ordering of $\bar{p}_H$ and $\bar{p}_L$.

\begin{definition}[Refined off-path belief]\label{def:d1-belief}
When $\bar{p}_L > \bar{p}_H$, every belief under which the $L$-type gains from a deviation to Self also lets the $H$-type gain strictly (Lemma~\ref{lem:threshold}), and the off-path belief is set to $\hat{p}_{\text{Self}}^{\text{off}} = 1$. When $\bar{p}_H > \bar{p}_L$, the reverse containment holds and the belief is set to $\hat{p}_{\text{Self}}^{\text{off}} = 0$. When $\bar{p}_H = \bar{p}_L$, the two sets coincide and the belief is not restricted.
\end{definition}

The two thresholds are equal when $1 - c_{\text{TP}}/V = \pi_0 + \phi\rho/V$, that is, when $V = (c_{\text{TP}} + \phi\rho)/(1 - \pi_0) = \bar{V}(\rho) + \underline{V}$, the certification boundary. Because $\bar{p}_H$ increases in $V$ while, for $\rho > 0$, $\bar{p}_L$ decreases in $V$, three cases cover the parameter space away from this value.

\textit{(a) $V \leq \bar{V}(\rho)$.} Here $\bar{p}_L \geq 1$, so $D_L \cup D^0_L$ is empty or equals $\{1\}$, while $D_H$ is nonempty. The belief is set to $\hat{p}_{\text{Self}}^{\text{off}} = 1$, the $H$-type receives $V$ from Self against $V - c_{\text{TP}}$ from TP, and the TP-separating profile is not an equilibrium.

\textit{(b) $\bar{V}(\rho) < V < \bar{V}(\rho) + \underline{V}$.} Here $\bar{p}_H < \bar{p}_L < 1$, so $D_L \cup D^0_L = [\bar{p}_L, 1] \subsetneq (\bar{p}_H, 1] = D_H$. The belief is set to $\hat{p}_{\text{Self}}^{\text{off}} = 1$, the $H$-type receives $V > V - c_{\text{TP}}$, and the TP-separating profile is not an equilibrium.

\textit{(c) $V > \bar{V}(\rho) + \underline{V}$.} Here $\bar{p}_L < \bar{p}_H$, so $D_H \cup D^0_H = [\bar{p}_H, 1] \subsetneq (\bar{p}_L, 1] = D_L$. The belief is set to $\hat{p}_{\text{Self}}^{\text{off}} = 0$. The $H$-type receives $0$ from Self against $V - c_{\text{TP}} > 0$ from TP, the $L$-type receives $-\phi\rho$ from Self against $\pi_0 V$ from None, and the TP-separating profile is an equilibrium.

The reversal between (b) and (c) follows from the two cost structures. The $H$-type's gain from an unexpected self-claim is the fee $c_{\text{TP}}$, which does not vary with $V$. The $L$-type's gain from misrepresentation is proportional to $V$, while the collateral $\phi\rho$ does not vary with $V$. Below $\bar{V}(\rho) + \underline{V}$ the deviation threshold is lower for the $H$-type; above it, lower for the $L$-type.

\begin{proposition}[Existence of the TP-separating equilibrium]\label{prop:tp-existence}
The profile $(H \to \text{TP},\ L \to \text{None})$:
(i) is a PBE if and only if $V \geq \underline{V}$, supported by any off-path belief $\hat{p}_{\text{Self}}^{\text{off}} \leq \min(\bar{p}_H, \bar{p}_L)$, with the $H$-type indifferent between TP and None at $V = \underline{V}$;
(ii) for $V \neq \bar{V}(\rho)$, satisfies the Intuitive Criterion if and only if, in addition, $V > \bar{V}(\rho)$;
(iii) for $V \neq \bar{V}(\rho) + \underline{V}$, satisfies the D1 criterion if and only if $V > \bar{V}(\rho) + \underline{V}$.
Since $\bar{V}(\rho) + \underline{V} > \max(\bar{V}(\rho), \underline{V})$ for every $\rho > 0$, (iii) implies (i) and (ii).
\end{proposition}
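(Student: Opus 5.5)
I will prove the three parts in order, each time working only with the payoffs listed in Supplementary Note~5 and the thresholds $\bar{p}_H = 1 - c_{\text{TP}}/V$ and $\bar{p}_L = \pi_0 + \phi\rho/V$.

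\textbf{Part (i).} On the path, Bayes' rule gives $\hat{p}_{\text{TP}}=1$, and the price at None is fixed at $\pi_0V$. So I will check four deviations.
\begin{itemize}
\item $L$ to TP is infeasible.
\item $H$ to None is unprofitable if and only if $V-c_{\text{TP}}\ge \pi_0V$, i.e.\ $V\ge\underline{V}$. This gives necessity.
\item The $L$-type to Self and the $H$-type to Self are both unprofitable exactly when $\hat{p}^{\text{off}}_{\text{Self}}\le\min(\bar{p}_H,\bar{p}_L)$.
\end{itemize}
For sufficiency it remains to show this interval of admissible beliefs is nonempty. This follows because $\bar{p}_H\ge\pi_0>0$ when $V\ge\underline{V}$, and $\bar{p}_L>0$. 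Indifference at $V=\underline{V}$ is immediate.

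\textbf{Part (ii): Intuitive Criterion.} I first compute each type's equilibrium-dominated region at Self. The relevant benchmark is the type's payoff from Self under the most favorable belief, $\hat{p}^{\text{off}}_{\text{Self}}=1$.
\begin{itemize}
\item The $H$-type gets $V>V-c_{\text{TP}}$, so Self is never equilibrium-dominated for $H$.
\item The $L$-type gets $V-\phi\rho$, which is below its equilibrium payoff $\pi_0V$ iff $V<\bar{V}(\rho)$.
\end{itemize}
If $V<\bar{V}(\rho)$, the criterion forces all weight on $H$, so the belief is $1$. Then $H$ deviates profitably and the profile fails. If $V>\bar{V}(\rho)$, no type is excluded, so any belief is admissible. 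In particular the supporting beliefs from (i), e.g.\ $0$, remain, and the profile passes.

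\textbf{Part (iii): D1.} I will invoke Lemma~\ref{lem:threshold} and Definition~\ref{def:d1-belief} to reduce D1 to comparing $\bar{p}_H$ with $\bar{p}_L$, and then run cases (a)--(c) above.
\begin{itemize}
\item In (a) and (b), the forced belief is $1$, which breaks $H$'s incentive.
\item In (c), the forced belief is $0$, and both types' incentives hold. This also requires $V>\underline{V}$, which the final claim supplies.
\end{itemize}
The main obstacle is case (a). There Lemma~\ref{lem:threshold} presumes $V\ge\underline{V}$, and $D_L$ may be empty rather than a proper subset. I will handle it directly: if $D_L\cup D^0_L\subseteq\{1\}$ while $D_H=(\bar{p}_H,1]$ is nonempty, then the strict containment needed by D1 still holds. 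Also, if $V<\underline{V}$ the profile is not a PBE at all, so by (i) it fails D1 trivially.

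\textbf{Final claim.} I will show $\bar{V}(\rho)+\underline{V}>\max(\bar{V}(\rho),\underline{V})$ for $\rho>0$, using $\bar{V}(\rho)>0$ and $\underline{V}>0$. This shows that the D1 condition $V>\bar{V}(\rho)+\underline{V}$ implies the conditions of (i) and (ii).
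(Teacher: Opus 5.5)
Your proposal is correct and follows the paper's own argument essentially step for step. Part (i) uses the direct incentive checks against $\bar{p}_H$ and $\bar{p}_L$; part (ii) uses equilibrium dominance of the $L$-type's self-claim exactly when $V<\bar{V}(\rho)$; and part (iii) uses the threshold-reduction lemma with cases (a)--(c), where your explicit handling of an empty or $\{1\}$-valued $D_L\cup D^0_L$ and of $V<\underline{V}$ matches the paper's case (a) (one small point: in case (c), $V>\underline{V}$ already follows from $\bar{V}(\rho)\ge 0$, so you do not need the final claim there).
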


Part (ii) holds because the $L$-type's maximal deviation payoff $V - \phi\rho$ falls below its equilibrium payoff $\pi_0 V$ exactly when $V < \bar{V}(\rho)$, in which case the Intuitive Criterion attributes the deviation to the $H$-type and the profile fails, whereas for $V > \bar{V}(\rho)$ neither type is excluded on these grounds. Part (iii) restates cases (a) to (c); at the certification boundary itself the belief is unrestricted (Definition~\ref{def:d1-belief}).

\subsubsection*{The semi-separating equilibrium}

\begin{proposition}[Semi-separating equilibrium]\label{prop:semisep}
For $\bar{V}(\rho) \leq V \leq \bar{V}(\rho) + \underline{V}$, the profile in which $H$-type sellers choose Self and $L$-type sellers choose Self with probability $\sigma_L$ and None otherwise, with
\[
\sigma_L = \frac{\pi_0\,(1 - \pi_0 - y)}{(1 - \pi_0)(\pi_0 + y)}, \qquad y = \frac{\phi\rho}{V},
\]
and on-path belief $\hat{p}_{\text{Self}} = \pi_0 + \phi\rho/V$, is a PBE. The stated $\sigma_L$ lies in $[0, 1]$ throughout this range. For $\rho > 0$, the semi-separating profile is the unique equilibrium surviving the D1 criterion on the interior of the range.
\end{proposition}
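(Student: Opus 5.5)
The plan has two parts. First, verify directly that the stated profile is a PBE on the closed band $\bar{V}(\rho) \leq V \leq \bar{V}(\rho) + \underline{V}$ and that $\sigma_L \in [0,1]$. Second, show uniqueness under D1 on the open band by eliminating every other candidate profile.

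\emph{Belief and mixing probability.} Bayes' rule at Self gives $\hat{p}_{\text{Self}} = \pi_0/[\pi_0 + (1-\pi_0)\sigma_L]$. Setting this equal to $\pi_0 + y$ and solving yields the stated $\sigma_L$, so the belief and the mixing probability are consistent by construction. For the bounds on $\sigma_L$:
\begin{itemize}
\item $\sigma_L \geq 0$ iff $y \leq 1-\pi_0$, which is iff $V \geq \bar{V}(\rho)$.
\item $\sigma_L \leq 1$ reduces, after cross-multiplying, to $-\pi_0 y \leq (1-\pi_0) y$, which always holds.
\end{itemize}

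\emph{Sequential rationality.}
\begin{itemize}
\item The $L$-type is indifferent between Self ($\hat{p}_{\text{Self}}V - \phi\rho = \pi_0 V$) and None ($\pi_0 V$), so mixing is optimal. TP is infeasible for $L$.
\item The $H$-type receives $\pi_0 V + \phi\rho$ from Self, against $\pi_0 V$ from None and $V - c_{\text{TP}}$ from TP. Self beats None since $\phi\rho \geq 0$. Self beats TP iff $(1-\pi_0)V \leq c_{\text{TP}} + \phi\rho$, which is exactly $V \leq \bar{V}(\rho) + \underline{V}$.
\end{itemize}
No off-path beliefs are needed when $\sigma_L > 0$, because Self and None are both on path and the belief at TP is fixed at one. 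At $V = \bar{V}(\rho)$ we have $\sigma_L = 0$, and the profile is the Self-separating equilibrium of Proposition~\ref{prop:self-existence}. This completes the existence part.

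\emph{Uniqueness under D1 on the interior, $\rho > 0$.} I will enumerate profiles by the support of the $H$-type's strategy, using that None always pays $\pi_0 V$ and that TP reveals $H$.
\begin{itemize}
\item \textbf{Self-separating.} It fails because $V > \bar{V}(\rho)$ (Proposition~\ref{prop:self-existence}).
\item \textbf{TP-separating.} It fails D1 because $V < \bar{V}(\rho) + \underline{V}$ (Proposition~\ref{prop:tp-existence}(iii)).
\item \textbf{Profiles in which Self is off path} (e.g.\ pooling on None, or $H$ on TP/None with $L$ on None). Apply the argument of Lemma~\ref{lem:threshold} with $H$'s equilibrium payoff at most $\max(\pi_0 V, V - c_{\text{TP}})$. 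The $H$-threshold is then at most $\max(\pi_0,\, 1 - c_{\text{TP}}/V)$, which is strictly below $\bar{p}_L = \pi_0 + y$ on the open band. D1 therefore sets the belief at Self to one, and $H$ deviates profitably, since it gains once the belief exceeds its threshold.
\item \textbf{Self on path and $H$ not playing Self.} Then Self is used only by $L$, so $\hat{p}_{\text{Self}} = 0$ and $L$ strictly prefers None, a contradiction.
\item \textbf{Self on path and $H$ playing Self with positive probability.} Two subcases.
\begin{itemize}
\item If $L$ plays Self with probability one, then $\hat{p}_{\text{Self}} \leq \pi_0$, so $L$'s Self payoff is below $\pi_0 V$ and $L$ deviates to None.
\item If $L$ plays None with probability one, then $\hat{p}_{\text{Self}} = 1$ and $L$ strictly gains from Self since $V > \bar{V}(\rho)$.
\end{itemize}
So $L$ must mix, and its indifference pins $\hat{p}_{\text{Self}} = \pi_0 + y$. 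Given this belief, $H$ strictly prefers Self to None (because $\phi\rho > 0$) and strictly prefers Self to TP (because we are strictly inside the band). Hence $H$ plays pure Self, and Bayes' rule then forces the stated $\sigma_L$.
\end{itemize}

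The main obstacle is making this enumeration exhaustive, in particular the off-path cases where the $H$-type's equilibrium payoff differs from $V - c_{\text{TP}}$, so that Lemma~\ref{lem:threshold} has to be re-run with a different $\bar{p}_H$. I would handle all such cases at once by bounding the $H$-type's equilibrium payoff above by $\max(\pi_0 V, V - c_{\text{TP}})$ and checking that the resulting threshold lies strictly below $\bar{p}_L$ throughout the open band.
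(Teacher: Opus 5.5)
Your proposal is correct and follows essentially the same route as the paper. Existence comes from the $L$-type's indifference, Bayes' rule, and the $H$-type's Self-versus-TP comparison, and uniqueness comes from eliminating the other profiles, using the D1 threshold comparison wherever Self is off path. Your version is, if anything, more exhaustive than the paper's list, because you organize the cases by the support of the $H$-type's strategy and dispose of every Self-off-path profile at once with the single bound $\max(\pi_0, 1 - c_{\text{TP}}/V) < \pi_0 + \phi\rho/V$. That also covers cases such as the $H$-type mixing between TP and None at $V = \underline{V}$, which the paper's catch-all step (v) does not spell out.
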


\begin{proof}
\textit{Existence.} The $L$-type's indifference condition $\hat{p}_{\text{Self}} V - \phi\rho = \pi_0 V$ gives $\hat{p}_{\text{Self}} = \pi_0 + \phi\rho/V$, and Bayes' rule yields this belief at the stated $\sigma_L$ (the mixing probability derived in Methods). $\sigma_L \geq 0$ holds if and only if $y \leq 1 - \pi_0$, that is, $V \geq \bar{V}(\rho)$; $\sigma_L \leq 1$ reduces to $\phi\rho/V \geq 0$ and always holds. The $H$-type receives $\pi_0 V + \phi\rho$ from Self and $\pi_0 V$ from None. A deviation to TP pays $V - c_{\text{TP}}$ (Assumption 3), and
\[
\pi_0 V + \phi\rho \geq V - c_{\text{TP}} \iff V \leq \bar{V}(\rho) + \underline{V}.
\]
At $V = \bar{V}(\rho)$, $\sigma_L = 0$ and the profile coincides with the Self-separating equilibrium; at $V = \bar{V}(\rho) + \underline{V}$, $\sigma_L > 0$ and the $H$-type's payoffs from Self and TP are equal.

\textit{Uniqueness.} The remaining profiles fail on the interior of the range. (i) The Self-separating profile requires $V \leq \bar{V}(\rho)$ (Proposition~\ref{prop:self-existence}). (ii) The TP-separating profile fails the D1 criterion, by case (b). (iii) Pooling on Self gives $\hat{p}_{\text{Self}} = \pi_0$ by Bayes' rule, under which the $L$-type receives $\pi_0 V - \phi\rho < \pi_0 V$ and prefers None for $\rho > 0$. (iv) Pooling on None, with both types receiving $\pi_0 V$, is a PBE for $V < \underline{V}$ under off-path beliefs $\hat{p}_{\text{Self}}^{\text{off}} \leq \pi_0$. The deviation thresholds at Self are $\pi_0$ for the $H$-type and $\pi_0 + \phi\rho/V$ for the $L$-type, so for $\rho > 0$ the containment argument of Lemma~\ref{lem:threshold} sets the belief to $1$ and the $H$-type deviates. Weaker refinements do not remove this profile, so uniqueness on $\bar{V}(\rho) < V < \underline{V}$ relies on the D1 criterion. (v) Profiles in which the $H$-type mixes across signals, and profiles in which only the $L$-type sends Self, are excluded by Bayes' rule together with the $L$-type's participation condition.
\end{proof}

\begin{remark}[Robustness to Assumption 4]\label{rem:assumption3}
Assumption 4 enters the refinement only through the $L$-type's equilibrium payoff $\pi_0 V$. If None were instead subject to Bayesian updating, that payoff in the TP-separating profile would be $0$, the threshold $\bar{p}_L$ would become $\phi\rho/V$ with $\bar{p}_H$ unchanged, and case (b) would cover $\phi\rho < V < \phi\rho + c_{\text{TP}}$, an interval of width $c_{\text{TP}}$. The case structure persists.
\end{remark}

\begin{remark}[Price at the certification boundary]\label{rem:pricejump}
At $V = \bar{V}(\rho) + \underline{V}$ the $H$-type's payoff is continuous but the price is not. Below the boundary the item sells at $\hat{p}_{\text{Self}} V = \pi_0 V + \phi\rho$, which equals $V - c_{\text{TP}}$ on the boundary; above it the item is certified and sells at $V$. The price therefore jumps upward by exactly $c_{\text{TP}}$. The fee is passed through to buyers, and the seller's net payoff is unchanged across the boundary.
\end{remark}

\subsubsection*{Regime partition}

\begin{corollary}[Regime partition]\label{cor:regimes}
Propositions~\ref{prop:self-existence}, \ref{prop:tp-existence}, and~\ref{prop:semisep} together imply that the boundaries $\bar{V}(\rho)$ and $\bar{V}(\rho) + \underline{V}$ partition the $(\rho, V)$ space into three regimes:
\begin{itemize}
\item \textit{Self regime} ($V < \bar{V}(\rho)$): the Self-separating equilibrium $(H \to \text{Self},\ L \to \text{None})$;
\item \textit{Semi-separating regime} ($\bar{V}(\rho) < V < \bar{V}(\rho) + \underline{V}$): a band of vertical width $\underline{V}$ in which $H$-type sellers choose Self, $L$-type sellers mix between Self and None, and the belief is $\hat{p}_{\text{Self}} = \pi_0 + \phi\rho/V$;
\item \textit{TP regime} ($V > \bar{V}(\rho) + \underline{V}$): the TP-separating equilibrium $(H \to \text{TP},\ L \to \text{None})$.
\end{itemize}
The exclusion arguments (iii) to (v) in the proof of Proposition~\ref{prop:semisep} do not use the restriction of $V$ to the band, so for $\rho > 0$ the equilibrium is unique within each regime. The existence statements are to be read away from the certification boundary, where the $H$-type is indifferent between Self and TP (Remark~\ref{rem:pricejump}); the other boundary is covered by the propositions themselves.
\end{corollary}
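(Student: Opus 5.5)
The corollary follows by assembling the three propositions region by region and then upgrading existence to uniqueness using exclusion arguments that do not depend on $V$. Fix $\rho>0$. Note that $\bar{V}(\rho)>0$ and $\underline{V}>0$, so the two boundaries are distinct parallel lines and the three open regions are well defined.

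\emph{Self regime} ($V<\bar{V}(\rho)$). Proposition~\ref{prop:self-existence} gives existence of the Self-separating equilibrium. Uniqueness comes from two exclusions:
\begin{itemize}
\item The TP-separating profile fails D1 by case (a) of Definition~\ref{def:d1-belief}, i.e.\ Proposition~\ref{prop:tp-existence}(iii).
\item The semi-separating profile would require $\sigma_L<0$, because $\sigma_L\ge 0$ holds if and only if $V\ge\bar{V}(\rho)$.
\end{itemize}

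\emph{Semi-separating regime} ($\bar{V}(\rho)<V<\bar{V}(\rho)+\underline{V}$). Proposition~\ref{prop:semisep} already supplies both existence and D1-uniqueness on the interior of this band, so there is nothing new to prove. The width of the band is $\underline{V}$, read directly off the two boundary formulas.

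\emph{TP regime} ($V>\bar{V}(\rho)+\underline{V}$). Since $\bar{V}(\rho)+\underline{V}>\underline{V}$, Proposition~\ref{prop:tp-existence}(i) and (iii) give a D1-surviving PBE. Uniqueness comes from two exclusions:
\begin{itemize}
\item The Self-separating profile is excluded because $V>\bar{V}(\rho)$.
\item The semi-separating profile is excluded because the $H$-type's incentive constraint $\pi_0V+\phi\rho\ge V-c_{\text{TP}}$ fails strictly above the certification boundary.
\end{itemize}

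In every regime, the remaining candidates are pooling on Self, pooling on None, $H$-type mixing, and profiles where only the $L$-type sends Self. These are ruled out by arguments (iii)--(v) in the proof of Proposition~\ref{prop:semisep}. The key point is to check that each of those arguments uses only $\rho>0$ and never the band restriction on $V$:
\begin{itemize}
\item Pooling on Self gives the $L$-type $\pi_0V-\phi\rho<\pi_0V$ for every $V$.
\item For pooling on None, the D1 thresholds at Self are $\pi_0$ for the $H$-type and $\pi_0+\phi\rho/V$ for the $L$-type. The first is strictly smaller for every $V>0$, so the off-path belief is set to $1$ and the $H$-type deviates, whatever the regime.
\item The Bayes-rule and participation arguments in (v) are $V$-free.
\end{itemize}

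The main obstacle is this last verification. Argument (v) is only sketched in the paper, so I would spell it out:
\begin{itemize}
\item If the $H$-type mixes between Self and TP, it is indifferent, which pins $\hat p_{\text{Self}}=1-c_{\text{TP}}/V$. Then either Bayes consistency with $L$-type mixing forces exactly the boundary case, or the $L$-type strictly prefers one action; both contradict the interior of a regime.
\item If the $H$-type puts weight on None, that requires $\hat p_{\text{Self}}\le\pi_0$, and then the $L$-type strictly avoids Self. Self is then either off path, so the D1 argument above applies, or sent only by $H$, giving $\hat p_{\text{Self}}=1$, a contradiction.
\item If only the $L$-type sends Self, Bayes' rule gives $\hat p_{\text{Self}}=0$, and the $L$-type's payoff $-\phi\rho$ violates participation.
\end{itemize}

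Finally, the boundaries themselves are handled as the statement indicates. At $V=\bar{V}(\rho)$, Proposition~\ref{prop:self-existence} shows the Self-separating and semi-separating profiles coincide. At the certification boundary, D1 leaves the off-path belief unrestricted, so I would exclude that line from the existence and uniqueness claims, citing Remark~\ref{rem:pricejump}.
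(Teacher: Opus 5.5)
Your proposal is correct and follows the paper's own route: the corollary is obtained by assembling the three propositions regime by regime and observing that exclusions (iii)--(v) never use the band restriction on $V$. You also usefully make explicit the cross-regime exclusions and argument (v), which the paper leaves as a one-line sketch.

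One step in your version of (v) needs rewording: the case where the $H$-type mixes between Self and TP, so that $\hat{p}_{\text{Self}} = 1 - c_{\text{TP}}/V$. There, $L$-type mixing is pinned to the certification boundary by the $L$-type's indifference condition, not by Bayes' rule. An $L$-type that strictly prefers one action is excluded by Bayes' rule, which then gives $\hat{p}_{\text{Self}} = 1$ if $L$ plays None. If $L$ plays Self, it gives $\hat{p}_{\text{Self}} < \pi_0 \leq 1 - c_{\text{TP}}/V$, using $V \geq \underline{V}$, which the $H$-type's willingness to certify requires. Neither value equals $1 - c_{\text{TP}}/V$, so the profile fails.
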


\subsubsection*{Absence of certification}

The partition presumes that certification is available. For a seller who cannot obtain it, as in the reputation dynamics of Methods and Supplementary Note~8, the Semi-separating regime is not closed from above. The certification boundary arises solely from the $H$-type's deviation to TP in the existence proof of Proposition~\ref{prop:semisep}, a comparison that such a seller never faces, and the remaining exclusion arguments either do not involve TP or become vacuous without it. For $\rho > 0$, the semi-separating profile is therefore the unique equilibrium surviving the D1 criterion at every $V > \bar{V}(\rho)$.

\subsubsection*{Value-dependent certification cost}

The fee $c_{\text{TP}}$ is assumed independent of item value, whereas grading services charge more for cards of higher appraised value.

\begin{remark}[Value-dependent grading fee]\label{rem:valuefee}
Let the fee include a value-proportional component, $c_{\text{TP}}(V) = c_0 + \tau V$ with $c_0 > 0$ and $0 < \tau < 1 - \pi_0$, so that the proportional part does not absorb the entire quality premium. Substituting into the $H$-type's indifference condition replaces $\underline{V}$ by $c_0/(1 - \tau - \pi_0)$ and gives the certification boundary the slope $\phi/(1 - \tau - \pi_0)$, steeper than the slope $\phi/(1 - \pi_0)$ of $\bar{V}(\rho)$, which does not involve the fee and is unchanged. The band therefore widens with reputation instead of keeping a constant width, while the three regimes and the order of the two boundaries are unchanged. The main text adopts the fixed-fee approximation for parsimony.
\end{remark}

\subsection{Reputation Dynamics}\label{subsec:th-D}

Methods states a law of motion for the seller's reputation and the properties used in the main text. This note derives them, together with the finer structure of the certification switch. Consider the seller of Supplementary Note~7 for whom certification is unavailable, facing items of value $V$. Quality is drawn per listing, so a share $\pi_0$ of the seller's listings is high quality; on that share the $H$-type trades and reputation rises. On the remaining share the $L$-type self-claims with probability $\sigma_L(\rho)$, each misrepresentation lowers reputation by the same amount, and listings without a claim leave it unchanged. A single rate $\alpha > 0$ converts the net frequency of the two movements into a speed,
\begin{equation}
\dot{\rho} = \alpha\,[\,\pi_0 - (1 - \pi_0)\,\sigma_L(\rho)\,] ,
\label{eq:si-motion}
\end{equation}
with $\sigma_L$ as in Proposition~\ref{prop:semisep}. We call $\alpha(1 - \pi_0)\sigma_L(\rho)$ the erosion term and $\pi_0\alpha$ the full rate. The rate $\alpha$ fixes the unit of time and cancels from every rest point, existence condition, and comparison below. The penalty $\phi$ enters only through $y = \phi\rho/V$, the argument of $\sigma_L$. Reputation is bounded to $[0, 1]$. Where $\dot{\rho}$ is positive just below a point and negative just above it, trajectories from both sides converge to that point and remain there, and the same holds at an end of the interval toward which $\dot{\rho}$ points.

\begin{proposition}[Interior rest point]\label{prop:dyn-rest}
Equation~\ref{eq:si-motion} has at most one rest point in $(0, 1)$. One exists if and only if $\pi_0 < 1/2$ and $V(1/2 - \pi_0) < \phi$, in which case it lies at $\rho^{*} = V(1/2 - \pi_0)/\phi$ and is unstable.
\end{proposition}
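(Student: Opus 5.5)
The plan is to write the right-hand side of Equation~\ref{eq:si-motion} as a function of the single variable $y=\phi\rho/V$, show that it is strictly increasing in $\rho$, and read off existence, uniqueness, location, and instability of the rest point from this monotonicity.

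\textbf{Step 1: the piecewise law of motion.} First I will record that the semi-separating formula for $\sigma_L$ (Proposition~\ref{prop:semisep}, extended to all $V>\bar V(\rho)$ by the ``Absence of certification'' paragraph) applies exactly when $V>\bar V(\rho)$. Equivalently, it applies when $y<1-\pi_0$. For $y\ge 1-\pi_0$ the seller is in the Self regime, where $\sigma_L=0$. There $\dot\rho=\pi_0\alpha>0$, so no rest point can lie in that part of $(0,1)$.

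\textbf{Step 2: the erosion term is decreasing.} On $y<1-\pi_0$, substituting $\sigma_L$ gives
\[
(1-\pi_0)\sigma_L=\frac{\pi_0(1-\pi_0-y)}{\pi_0+y}.
\]
This is strictly decreasing in $y$. Its numerator decreases in $y$, its denominator increases, and both are positive on the range. Since $y$ is proportional to $\rho$ with positive coefficient $\phi/V$, the function $f(\rho)=\dot\rho$ is strictly increasing in $\rho$ on this range. Combined with $f>0$ on the Self range, $f$ has at most one zero in $(0,1)$.

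\textbf{Step 3: locating the rest point.} Setting $f=0$ and cancelling $\pi_0>0$ gives $\pi_0+y=1-\pi_0-y$, so $y=1/2-\pi_0$. This yields $\rho^*=V(1/2-\pi_0)/\phi$. The candidate lies in the semi-separating range automatically, because $1/2-\pi_0<1-\pi_0$. It is interior precisely when $y>0$ and $\rho^*<1$. These conditions are $\pi_0<1/2$ and $V(1/2-\pi_0)<\phi$.

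As a consistency check, at $\rho=0$ we have $y=0$ and $f(0)=\alpha(2\pi_0-1)$. This is negative exactly when $\pi_0<1/2$, which matches the sign change required for a zero.

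\textbf{Step 4: instability.} Because $f$ is strictly increasing through $\rho^*$, we get $f<0$ just below $\rho^*$ and $f>0$ just above it. Trajectories therefore move away from $\rho^*$ on both sides, so the rest point is unstable. The same sign pattern also delivers the absorption at $\rho=0$ stated in Methods.

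The only delicate point is Step 1. I need to justify using the semi-separating $\sigma_L$ throughout $y<1-\pi_0$, including the boundary case $\rho=0$. This holds because $\bar V(0)=0<V$. I also need to handle the switch to $\sigma_L=0$ at $\bar V(\rho)$, where $f$ stays monotone and continuous since $\sigma_L\to0$ as $y\to 1-\pi_0$.
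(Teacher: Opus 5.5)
Your proposal is correct and follows essentially the same route as the paper's proof. You localize any rest point to the semi-separating range, since $\dot\rho=\pi_0\alpha>0$ in the Self regime; you show the right-hand side is strictly increasing in $\rho$ through $y=\phi\rho/V$; you solve for $y^{*}=1/2-\pi_0$ and check interiority together with $y^{*}<1-\pi_0$; and you read off instability from the crossing from below. The only cosmetic difference is that you obtain monotonicity from the numerator--denominator argument rather than by differentiating $\sigma_L$ with respect to $y$.
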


\begin{proof}
\textit{Localization.} Outside the semi-separating range the $L$-type plays None, so $\sigma_L = 0$ and $\dot{\rho} = \pi_0\alpha > 0$. Every rest point therefore lies inside that range.

\textit{Monotonicity.} Differentiating $\sigma_L$ gives $d\sigma_L/dy = -\pi_0/[(1 - \pi_0)(\pi_0 + y)^2] < 0$, and $y = \phi\rho/V$ increases in $\rho$, so the right-hand side of Eq.~\ref{eq:si-motion} is strictly increasing in $\rho$ on the semi-separating range. It has at most one zero there and crosses it from below, so any rest point is unstable.

\textit{Location.} A rest point requires $\sigma_L = \pi_0/(1 - \pi_0)$, that is, $(1 - \pi_0 - y)/(\pi_0 + y) = 1$, so $y^{*} = 1/2 - \pi_0$ and $\rho^{*} = V y^{*}/\phi = V(1/2 - \pi_0)/\phi$.

\textit{Interiority.} $\rho^{*} > 0$ if and only if $\pi_0 < 1/2$, and $\rho^{*} < 1$ if and only if $V(1/2 - \pi_0) < \phi$. The semi-separating range ends at $y = 1 - \pi_0$, and $y^{*} = 1/2 - \pi_0 < 1 - \pi_0$ for every $\pi_0 \in (0, 1)$, so the localization step never excludes $\rho^{*}$.
\end{proof}

At the baseline values of Table~1 in the main text the second condition reads $V < \phi/(1/2 - \pi_0) = 2.5$, which every item value shown in Figs.~4 and~6 in the main text satisfies. Over that range the minority condition alone decides whether the rest point exists.

\begin{proposition}[Absorption at the floor]\label{prop:dyn-floor}
Under the conditions of Proposition~\ref{prop:dyn-rest}, $\dot{\rho} < 0$ throughout $[0, \rho^{*})$, so reputation starting below $\rho^{*}$ reaches $\rho = 0$ in finite time and remains there. The floor is not a rest point of Eq.~\ref{eq:si-motion}. At $\rho = 0$ a self-claim carries only the prior, so $\sigma_L = 1$ and $\dot{\rho} = \alpha(2\pi_0 - 1) < 0$. Reputation stops there because it is bounded below, not because the law of motion vanishes.
\end{proposition}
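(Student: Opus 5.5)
The argument rests on three facts about the law of motion in Eq.~\ref{eq:si-motion}. First, the sign of $\dot\rho$ on $[0,\rho^*)$. Second, its value at the floor $\rho=0$. Third, the finite-time claim, which follows from a uniform negative bound on $\dot\rho$. I will work throughout in the variable $y=\phi\rho/V$. By Proposition~\ref{prop:dyn-rest}, the right-hand side is strictly increasing in $\rho$ on the semi-separating range and vanishes only at $\rho^*$.

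\textbf{Sign on $[0,\rho^*)$.} The localization step of Proposition~\ref{prop:dyn-rest} covers the complement of the semi-separating range, where $\sigma_L=0$ and $\dot\rho=\pi_0\alpha>0$. That range is $\rho\ge \bar V^{-1}(V)$, equivalently $y\ge 1-\pi_0$. Since $y^*=1/2-\pi_0<1-\pi_0$, the whole interval $[0,\rho^*)$ lies in the semi-separating range, provided we read it as including $\rho=0$; at $\rho = 0$ we have $\bar V(0)=0<V$. On that interval the right-hand side is strictly increasing and vanishes at $\rho^*$, so it is strictly negative on $[0,\rho^*)$.

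\textbf{Value at the floor.} Substituting $y=0$ into the formula for $\sigma_L$ in Proposition~\ref{prop:semisep} gives $\sigma_L=\pi_0(1-\pi_0)/[(1-\pi_0)\pi_0]=1$. The belief is then $\hat p_{\text{Self}}=\pi_0$, which is consistent with pooling when the penalty is zero. Hence $\dot\rho=\alpha[\pi_0-(1-\pi_0)]=\alpha(2\pi_0-1)$, which is negative because $\pi_0<1/2$. So $\rho=0$ is not a zero of the law of motion. The paper's convention that trajectories stop at an endpoint toward which $\dot\rho$ points then makes it absorbing.

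\textbf{Finite time.} Take any start $\rho_0<\rho^*$. Monotonicity gives $\dot\rho\le \dot\rho|_{\rho=\rho_0}<0$ along the trajectory, since $\rho$ only decreases from $\rho_0$. Reputation therefore falls at a rate bounded away from zero and reaches $0$ no later than time $\rho_0/|\dot\rho(\rho_0)|$. It then stays at $0$ by the boundary convention.

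\textbf{Main obstacle.} The delicate point is the behaviour at $\rho=0$. The paper's D1-based uniqueness in Proposition~\ref{prop:semisep} is stated only for $\rho>0$, and at $\rho=0$ pooling on Self and pooling on None yield the same $L$-type payoff. I would justify $\sigma_L(0)=1$ in one of two ways. One option is to take the right limit $y\downarrow0$ of the equilibrium mixing probability, which is continuous in $y$. The other is to note that the statement defines the dynamics by the formula for $\sigma_L$ in Eq.~\ref{eq:si-motion}, so the floor value is simply its evaluation at $y=0$. Either route keeps the bound $\dot\rho<0$ valid on the closed-left interval $[0,\rho^*)$.
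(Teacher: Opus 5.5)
Your proposal is correct and follows the paper's proof. Strict monotonicity of the right-hand side on the semi-separating range, with its unique zero at $\rho^{*}$, gives $\dot\rho<0$ on $[0,\rho^{*})$ and a uniform negative bound on $[0,\rho_0]$ for the finite-time claim, and setting $y=0$ in $\sigma_L$ gives $\sigma_L=1$ and $\dot\rho=\alpha(2\pi_0-1)$. Your caveat that the D1 uniqueness is only claimed for $\rho>0$, so that $\sigma_L(0)=1$ rests on evaluating the formula or taking its right limit, is a fair point the paper passes over.

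One slip to correct: the semi-separating range is $\rho\le(1-\pi_0)V/\phi$, i.e.\ $y\le 1-\pi_0$, not $y\ge 1-\pi_0$ as you wrote. Your next step, $y^{*}<1-\pi_0$ together with $\bar{V}(0)=0<V$, actually uses this corrected direction.
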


\begin{proof}
The right-hand side of Eq.~\ref{eq:si-motion} is strictly increasing on the semi-separating range and vanishes at $\rho^{*}$, so it is negative on $[0, \rho^{*})$ and bounded away from zero on any $[0, \rho_0]$ with $\rho_0 < \rho^{*}$, which gives the finite time. Setting $y = 0$ in $\sigma_L$ gives $\pi_0(1 - \pi_0)/[(1 - \pi_0)\pi_0] = 1$, whence the value at the floor.
\end{proof}

\begin{proposition}[The switch and the drop]\label{prop:dyn-switch}
Suppose certification is available. The $H$-type chooses TP below the switching reputation
\begin{equation}
\rho_s = [(1 - \pi_0)V - c_{\text{TP}}]/\phi ,
\label{eq:si-rho-s}
\end{equation}
and $\rho_s \in (0, 1)$ if and only if $\underline{V} < V < (\phi + c_{\text{TP}})/(1 - \pi_0)$. Below $\rho_s$ the $L$-type plays None and reputation grows at the full rate. At $\rho_s$ the growth rate falls by
\begin{equation}
\Delta = \frac{\pi_0\,\alpha\,c_{\text{TP}}}{V - c_{\text{TP}}} .
\label{eq:si-drop}
\end{equation}
\end{proposition}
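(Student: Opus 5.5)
The plan is to read the $H$-type's choice off the regime partition of Corollary~\ref{cor:regimes}, holding $V$ fixed and moving $\rho$. The certification boundary $V = \bar{V}(\rho) + \underline{V}$ can be solved for reputation: $(1-\pi_0)V = \phi\rho + c_{\text{TP}}$ gives exactly $\rho_s$ in Eq.~\ref{eq:si-rho-s}. Since $\bar{V}(\rho)$ increases in $\rho$, the condition $V > \bar{V}(\rho) + \underline{V}$, which defines the TP regime, is equivalent to $\rho < \rho_s$. By Proposition~\ref{prop:tp-existence}(iii) the $H$-type therefore certifies there, and the $L$-type plays None.

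Interiority is then direct. The condition $\rho_s > 0$ is $(1-\pi_0)V > c_{\text{TP}}$, i.e.\ $V > \underline{V}$. The condition $\rho_s < 1$ is $(1-\pi_0)V < \phi + c_{\text{TP}}$.

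Below $\rho_s$, the $L$-type never self-claims, so $\sigma_L = 0$. Every $H$-type listing trades and raises reputation, so Eq.~\ref{eq:si-motion} gives $\dot\rho = \pi_0\alpha$, the full rate.

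For the drop, I would evaluate the law of motion just above $\rho_s$. Since $\bar{V}(\rho_s) = V - \underline{V} < V$, the point $\rho_s$ lies inside the semi-separating band. There the $L$-type mixes with $\sigma_L(\rho)$ from Proposition~\ref{prop:semisep}. I would substitute $y_s = \phi\rho_s/V = 1 - \pi_0 - c_{\text{TP}}/V$, which gives $\pi_0 + y_s = 1 - c_{\text{TP}}/V$ and $1 - \pi_0 - y_s = c_{\text{TP}}/V$. Hence
\[
\sigma_L(\rho_s) = \frac{\pi_0\, c_{\text{TP}}/V}{(1-\pi_0)(1 - c_{\text{TP}}/V)} = \frac{\pi_0 c_{\text{TP}}}{(1-\pi_0)(V - c_{\text{TP}})}.
\]
The erosion term $\alpha(1-\pi_0)\sigma_L(\rho_s)$ then equals $\pi_0\alpha c_{\text{TP}}/(V - c_{\text{TP}})$, which is $\Delta$. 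Because $\sigma_L$ is continuous in $\rho$ on the band, the right-limit of $\dot\rho$ at $\rho_s$ is $\pi_0\alpha - \Delta$, while the left-limit is $\pi_0\alpha$.

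The main subtlety is the point $\rho_s$ itself, where the $H$-type is indifferent and Corollary~\ref{cor:regimes} leaves the equilibrium unrestricted. I would interpret ``falls by $\Delta$ at $\rho_s$'' as the jump between the one-sided limits of $\dot\rho$, and state that convention explicitly. I would also check that $V - c_{\text{TP}} > 0$, so that $\Delta$ is well defined and positive. This follows from $V > \underline{V} = c_{\text{TP}}/(1-\pi_0) > c_{\text{TP}}$.
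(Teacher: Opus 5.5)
Your proposal is correct and matches the paper's proof step for step: you solve the certification boundary $V = \bar{V}(\rho) + \underline{V}$ for $\rho_s$, read off interiority, set $\sigma_L = 0$ below the switch, and substitute $y_s = 1 - \pi_0 - c_{\text{TP}}/V$ into $\sigma_L$ to obtain the erosion term $\Delta$. Your extra remarks are clarifications the paper leaves implicit: the monotonicity argument that the TP regime is exactly $\rho < \rho_s$, reading the drop as the jump between the one-sided limits of $\dot\rho$, and the check $V > c_{\text{TP}}$. Strictly, $\rho_s$ sits on the edge of the closed band rather than inside it, but that is all your right-limit argument needs.
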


\begin{proof}
\textit{Switch.} The certification boundary is $V = \bar{V}(\rho_s) + \underline{V} = [\phi\rho_s + c_{\text{TP}}]/(1 - \pi_0)$, which gives Eq.~\ref{eq:si-rho-s}. Then $\rho_s > 0$ if and only if $V > c_{\text{TP}}/(1 - \pi_0) = \underline{V}$, and $\rho_s < 1$ if and only if $(1 - \pi_0)V < \phi + c_{\text{TP}}$. Below the switch the static equilibrium is TP-separating, so $\sigma_L = 0$ and the erosion term vanishes.

\textit{Drop.} At the switch, $y_s = \phi\rho_s/V = (1 - \pi_0) - c_{\text{TP}}/V < 1 - \pi_0$, so just above $\rho_s$ the $L$-type mixes at the semi-separating rate. Then $1 - \pi_0 - y_s = c_{\text{TP}}/V$ and $\pi_0 + y_s = (V - c_{\text{TP}})/V$. Substituting into $\sigma_L$ gives $\sigma_L(\rho_s) = \pi_0 c_{\text{TP}}/[(1 - \pi_0)(V - c_{\text{TP}})]$, and the erosion term $\alpha(1 - \pi_0)\sigma_L(\rho_s)$ equals Eq.~\ref{eq:si-drop}.
\end{proof}

\begin{lemma}[The Semi-separating band]\label{lem:dyn-band}
The $L$-type mixes on the band $(\rho_s,\ \rho_s + c_{\text{TP}}/\phi)$ and plays None on either side of it. The erosion term is therefore positive throughout the band and zero outside it, and reputation grows at the full rate above the band as well as below the switch. A positive erosion term does not by itself make reputation fall; reputation falls only where the erosion term exceeds the full rate, a strict subinterval of the band whenever it is nonempty. Measured from the interior rest point, the two edges of the band lie at
\begin{equation*}
\rho_s - \rho^{*} = \frac{V - 2c_{\text{TP}}}{2\phi} ,
\qquad
\Big(\rho_s + \frac{c_{\text{TP}}}{\phi}\Big) - \rho^{*} = \frac{V}{2\phi} .
\end{equation*}
The upper distance is positive at every item value, so $\rho^{*}$ always lies below the top of the band. It lies above $\rho_s$ if and only if $V < 2c_{\text{TP}}$.
\end{lemma}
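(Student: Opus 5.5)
The plan is to translate the static regime partition (Corollary~\ref{cor:regimes}) into reputation coordinates at fixed $V$, then read off the erosion term and compare it with the full rate. Fix $V$ with $\underline{V} < V$, so that $\rho_s$ of Proposition~\ref{prop:dyn-switch} is well defined. In the regime partition the Semi-separating regime is $\bar{V}(\rho) < V < \bar{V}(\rho) + \underline{V}$. I will rewrite the two inequalities in $\rho$ using $\bar{V}(\rho) = \phi\rho/(1-\pi_0)$ and $\underline{V} = c_{\text{TP}}/(1-\pi_0)$. The right inequality gives $\rho > [(1-\pi_0)V - c_{\text{TP}}]/\phi = \rho_s$, and the left gives $\rho < (1-\pi_0)V/\phi = \rho_s + c_{\text{TP}}/\phi$. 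Below $\rho_s$ the equilibrium is TP-separating, and above the top edge it is Self-separating. In both cases the $L$-type plays None, so $\sigma_L = 0$ and $\dot{\rho} = \pi_0\alpha$. Inside the band, Proposition~\ref{prop:semisep} gives $\sigma_L > 0$, because $y < 1-\pi_0$ there. The erosion term $\alpha(1-\pi_0)\sigma_L$ is therefore positive exactly on the band.

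For the claim that reputation falls only on a strict subinterval, I will use the monotonicity established in the proof of Proposition~\ref{prop:dyn-rest}. Since $\sigma_L$ is strictly decreasing in $y$, the erosion term is strictly decreasing in $\rho$ on the band. At the top edge $y \to 1-\pi_0$, so $\sigma_L \to 0$ and $\dot{\rho} \to \pi_0\alpha > 0$. It follows that the set where erosion exceeds the full rate is an interval $(\rho_s, \rho^{*})$ intersected with the band. This set is nonempty only if $\rho^{*} > \rho_s$, and it stays bounded away from the upper edge by continuity, which makes it a strict subinterval.

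The distances are direct substitutions with $\rho^{*} = V(1/2-\pi_0)/\phi$. For the lower edge, $\rho_s - \rho^{*} = [(1-\pi_0)V - c_{\text{TP}} - V/2 + \pi_0 V]/\phi = (V/2 - c_{\text{TP}})/\phi$, which equals $(V-2c_{\text{TP}})/(2\phi)$. For the upper edge, $(1-\pi_0)V/\phi - \rho^{*} = V/(2\phi)$. The upper distance is positive since $V > 0$, and the lower distance is negative iff $V < 2c_{\text{TP}}$, which gives the stated ordering.

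The main obstacle is bookkeeping at the edges. The first issue is that $\rho^{*}$ may lie outside $(0,1)$ or the band may extend past $[0,1]$. I will handle this by stating the distance identities purely algebraically, noting that they hold regardless of interiority. The second issue is whether the band is read as open or closed: the boundary points are measure-zero and do not affect the sign statements.
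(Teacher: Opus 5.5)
Your proposal is correct and follows essentially the paper's route. You read the band edges off the two regime boundaries: the top edge $(1-\pi_0)V/\phi$ comes from $V=\bar{V}(\rho)$ and the bottom edge $\rho_s$ from the certification boundary. You then subtract $\rho^{*}=V(1/2-\pi_0)/\phi$ directly. You also spell out the strict-subinterval claim using the monotonicity of $\sigma_L$ from Proposition~\ref{prop:dyn-rest}, which the paper's proof leaves implicit in the fact that $\rho^{*}$ sits a distance $V/(2\phi)>0$ below the top of the band.
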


\begin{proof}
The band ends where $V = \bar{V}(\rho)$, at $\rho = (1 - \pi_0)V/\phi$. Subtracting $\rho^{*} = V(1/2 - \pi_0)/\phi$ from Eq.~\ref{eq:si-rho-s} gives $(V/2 - c_{\text{TP}})/\phi$, and subtracting it from the upper edge gives $V/(2\phi)$. The two distances differ by $c_{\text{TP}}/\phi$, the width of the band. The upper distance is positive for $V > 0$, and the lower is positive exactly when $V > 2c_{\text{TP}}$.
\end{proof}

\begin{proposition}[Trap at the switch]\label{prop:dyn-trap}
Suppose $\pi_0 < 1/2$ and a switch exists. If $V > 2c_{\text{TP}}$, reputation grows at every level and converges to $\rho = 1$. If $V < 2c_{\text{TP}}$ and $V(1/2 - \pi_0) < \phi$, reputation below $\rho^{*}$ converges to $\rho_s$ and reputation above $\rho^{*}$ converges to $\rho = 1$. If $V < 2c_{\text{TP}}$ and $V(1/2 - \pi_0) > \phi$, reputation converges to $\rho_s$ from every starting point. In the last two cases the stall is at the discontinuity $\rho_s$, not at a zero of Eq.~\ref{eq:si-motion}, because the growth rate is the full rate just below the switch and negative just above it. At the equalities $V = 2c_{\text{TP}}$ and $V(1/2 - \pi_0) = \phi$, the rest point coincides with the switch or with $\rho = 1$, respectively, and the statements hold for every starting point other than that point.
\end{proposition}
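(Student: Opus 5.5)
The plan is to treat the law of motion with certification available as piecewise: full rate $\pi_0\alpha>0$ on $[0,\rho_s)$ (TP-separating, $\sigma_L=0$, Proposition~\ref{prop:dyn-switch}), the semi-separating expression of Eq.~\ref{eq:si-motion} on the band $(\rho_s,\rho_s+c_{\text{TP}}/\phi)$, and full rate again above the band (Lemma~\ref{lem:dyn-band}). On the band, the right-hand side is the same strictly increasing function of $\rho$ used in the proof of Proposition~\ref{prop:dyn-rest}. Its unique zero is at $\rho^{*}$, whether or not $\rho^{*}$ lies in $(0,1)$. So the sign of $\dot\rho$ on the band is negative exactly on $(\rho_s,\rho^{*})\cap$ band and positive above $\rho^{*}$. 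Lemma~\ref{lem:dyn-band} places $\rho^{*}$ below the top of the band always, and above $\rho_s$ iff $V<2c_{\text{TP}}$. The whole proof is then a sign chart on $[0,1]$, combined with the convention stated before Proposition~\ref{prop:dyn-rest}: trajectories converge to a point where $\dot\rho$ is positive just below and negative just above, and to an endpoint toward which $\dot\rho$ points.

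\emph{Case $V>2c_{\text{TP}}$.} Here $\rho^{*}<\rho_s$, so $\dot\rho>0$ on all of the band, and also below and above it. Hence $\dot\rho>0$ on $[0,1)$ except at the single discontinuity $\rho_s$, where both one-sided rates are positive. I would bound $\dot\rho$ below by a positive constant: use $\pi_0\alpha$ off the band, and on the band use the value at its left edge, $\pi_0\alpha-\Delta$ from Eq.~\ref{eq:si-drop}, which is positive since $V>2c_{\text{TP}}$ gives $\Delta<\pi_0\alpha$. Reputation therefore reaches $\rho=1$ in finite time and stays there.

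\emph{Case $V<2c_{\text{TP}}$.} Now $\rho_s<\rho^{*}$ and $\dot\rho<0$ on $(\rho_s,\min(\rho^{*},1))$. If $V(1/2-\pi_0)<\phi$, then $\rho^{*}<1$ by Proposition~\ref{prop:dyn-rest}. Starting points in $[0,\rho_s)$ rise at the full rate to $\rho_s$. Starting points in $(\rho_s,\rho^{*})$ fall back to $\rho_s$. Since the rate is $\pi_0\alpha>0$ just below $\rho_s$ and negative just above it, the convention makes $\rho_s$ the limit. Starting points above $\rho^{*}$ have $\dot\rho>0$ on the rest of the band and beyond, and go to $1$ as in the previous case, using the strict increase on $(\rho^{*},\text{top}]$. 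If instead $V(1/2-\pi_0)>\phi$, then $\rho^{*}>1$, so $\dot\rho<0$ on $(\rho_s,1]$ and every start converges to $\rho_s$. I will need the hypothesis that a switch exists ($\rho_s\in(0,1)$), so that the band actually begins inside the interval. Finally, I will emphasize that $\dot\rho(\rho_s^-)=\pi_0\alpha\neq0$, so the stall is not a zero of Eq.~\ref{eq:si-motion}.

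\emph{Boundary cases.} At $V=2c_{\text{TP}}$ we have $\rho^{*}=\rho_s$, and at $V(1/2-\pi_0)=\phi$ we have $\rho^{*}=1$. For every other start the same sign chart applies, except at the coinciding point itself, where the behavior is not pinned down. The main obstacle is conceptual rather than computational: the right-hand side is discontinuous at $\rho_s$, so convergence "to a point that is not a rest point" must rest on the paper's stated convergence convention (a Filippov-type sliding point) rather than on standard ODE stability. I would state that explicitly and check the one-sided signs carefully. Where finite-time arrival matters, I would handle it with the bounded-away-from-zero argument of Proposition~\ref{prop:dyn-floor}.
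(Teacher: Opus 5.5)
Your proposal is correct and follows the paper's proof essentially step for step. Both use the same piecewise sign chart: the full rate below $\rho_s$ and above the band, and the strictly increasing semi-separating rate on the band. Both use Lemma~\ref{lem:dyn-band} to place $\rho^{*}$ relative to the band edges, and both split on whether $\rho^{*}$ lies below or above $1$, with the rate $\pi_0\alpha-\Delta$ just above the switch deciding whether the drop stalls reputation. Your additions (the uniform lower bound $\pi_0\alpha-\Delta>0$, which gives finite-time arrival at $\rho=1$, and the explicit Filippov reading of the stall) only make explicit what the paper leaves implicit. If you write out the equality $V=2c_{\text{TP}}$, note that the right-hand limit at $\rho_s$ is then $0$ and the band rate vanishes linearly there, so starts below $\rho_s$ reach it and cannot leave; that case therefore follows the $V<2c_{\text{TP}}$ pattern with $\rho^{*}=\rho_s$.
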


\begin{proof}
\textit{Signs.} Below the switch and above the band the rate is the full rate, by Proposition~\ref{prop:dyn-switch} and Lemma~\ref{lem:dyn-band}. Inside the band the right-hand side of Eq.~\ref{eq:si-motion} is strictly increasing with its unique zero at $\rho^{*}$, since the monotonicity step of Proposition~\ref{prop:dyn-rest} applies unchanged.

\textit{Cases.} If $V > 2c_{\text{TP}}$, Lemma~\ref{lem:dyn-band} places that zero below the band, so reputation grows at every level. If $V < 2c_{\text{TP}}$, the same lemma places it above the switch. The right-hand side is then negative exactly on $(\rho_s, \rho^{*})$. Reputation there falls back to $\rho_s$, and reputation below $\rho_s$ rises to it at the full rate.

\textit{Reachability.} When $\rho^{*} < 1$, reputation above it leaves the band and converges to $\rho = 1$. When $\rho^{*} > 1$, the lemma places the top of the band above $\rho^{*}$ and hence above one, so $(\rho_s, 1]$ lies inside the band and the right-hand side is negative throughout. Reputation then converges to $\rho_s$ from every starting point.

Equivalently, the growth rate just above the switch is $\pi_0\alpha - \Delta = \pi_0\alpha\,(V - 2c_{\text{TP}})/(V - c_{\text{TP}})$, so the drop stalls reputation exactly when $V < 2c_{\text{TP}}$.
\end{proof}

\subsubsection*{The continuous approximation}

The dynamics treat the reputation record as a continuous variable evolving deterministically, whereas the record advances one transaction at a time. The differential equation is therefore an approximation that improves with the number of recorded transactions. Sellers in our data post about 16 listings on average, so Fig.~6 in the main text is a qualitative illustration of the equilibrium structure rather than a calibrated trajectory.

\subsection{Numerical Illustration}\label{subsec:th-E}

Supplementary Notes~6 and 7 give $\bar{V}(\rho)$ (Eq.~\ref{eq:bar-V}) and $\underline{V}$ (Eq.~\ref{eq:underline-V}) explicitly. For the qualitative illustration in Fig.~4 in the main text we use the reference values $\pi_0 = 0.3$, $\phi = 0.5$, and $c_{\text{TP}} = 0.2$, which give $\bar{V}(1) = 0.714$, $\underline{V} = 0.286$, and a certification boundary of $\bar{V}(1) + \underline{V} = 1.000$ at full reputation. The reputation dynamics of Supplementary Note~8 add the rate $\alpha > 0$, which only fixes the unit of time and therefore has no row in Table~1 in the main text. Fig.~6 in the main text is drawn at $\alpha = 0.1$, which sets the vertical scale of the phase line.

\label{sec:si-methods-end}
\FloatBarrier
\clearpage

\SIresults

\section*{Descriptive Price Analysis}\label{sec:res-A}

Here we show that both pricing patterns hold in descriptive regressions of the log price ratio. Reputation raises the price premium from self-grading but lowers the premium from TP grading. We document these patterns with linear mixed-effects models with seller random effects, regressing the log price ratio on reputation, signal type, and their interaction while controlling for market price bracket and condition category (Supplementary Fig.~\ref{fig:S1}). Across all ungraded listings, the predicted log price ratio rises with reputation under self-grading (interaction slope $\beta = 0.073$; panel A), whereas the corresponding slope for TP grading is negative ($\beta = -0.118$; panel C), so high-reputation sellers obtain a smaller incremental premium from TP grading. Both patterns persist among sold items ($\beta = 0.034$, SE $= 0.006$, $p < 0.001$ for self-grading, panel B; $\beta = -0.179$ for TP grading, panel D), so selection from unsold listings does not drive them.

\begin{figure}[H]
\centering
\makebox[\textwidth][c]{\includegraphics[width=18cm]{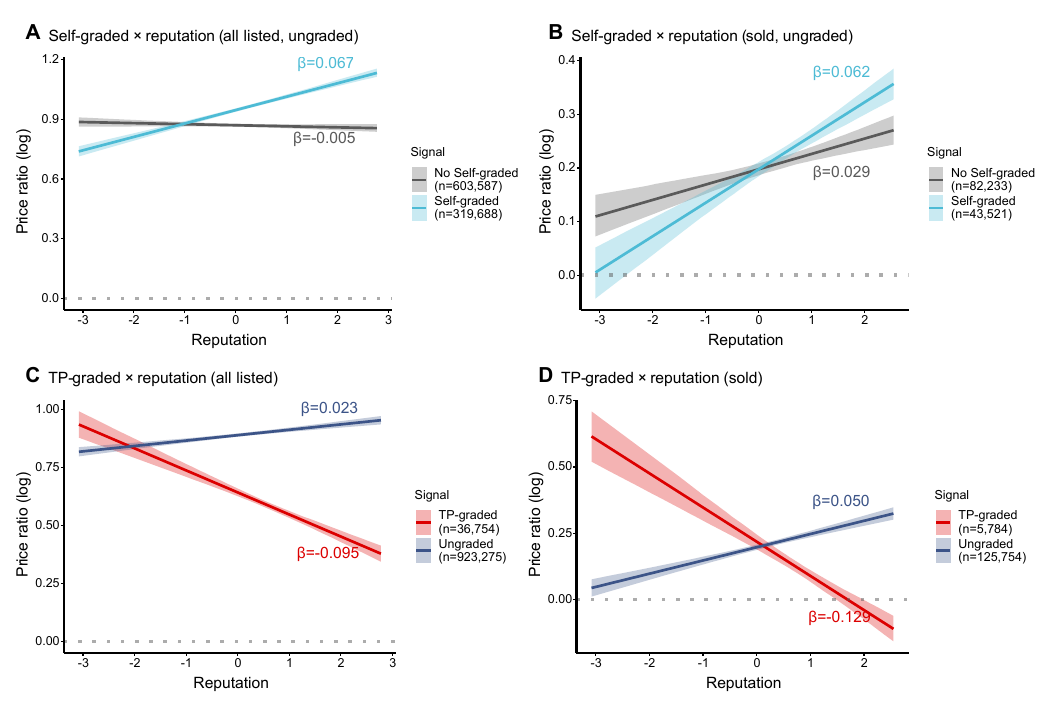}}
\caption{Descriptive interaction plots of seller reputation and signal type. Each panel shows predictions and prediction intervals from a linear mixed-effects model with seller random effects, controlling for market price bracket and condition category. \textbf{(A)} Reputation $\times$ self-graded, all ungraded listings. \textbf{(B)} Reputation $\times$ self-graded, sold ungraded items only ($N = 125{,}754$). \textbf{(C)} Reputation $\times$ TP-graded, all listings. \textbf{(D)} Reputation $\times$ TP-graded, sold items only.}
\label{fig:S1}
\end{figure}

\clearpage
\section*{Seller Volume}\label{sec:res-B}

Here we show that signal choice is associated with transaction volume as well as per-unit prices. Using data aggregated at the seller-by-signal level (sellers with at least three listings per signal), we compare three volume metrics across reputation levels (Supplementary Fig.~\ref{fig:S2}). High-reputation sellers list more items under every signal, although the increase is modest for TP-graded items (panel A), and their TP-graded listings sell at particularly high rates (panel B). The two effects combine in sale counts, so high-reputation sellers realize greater transaction volumes under every signal (panel C).

\begin{figure}[H]
\centering
\makebox[\textwidth][c]{\includegraphics[width=18cm]{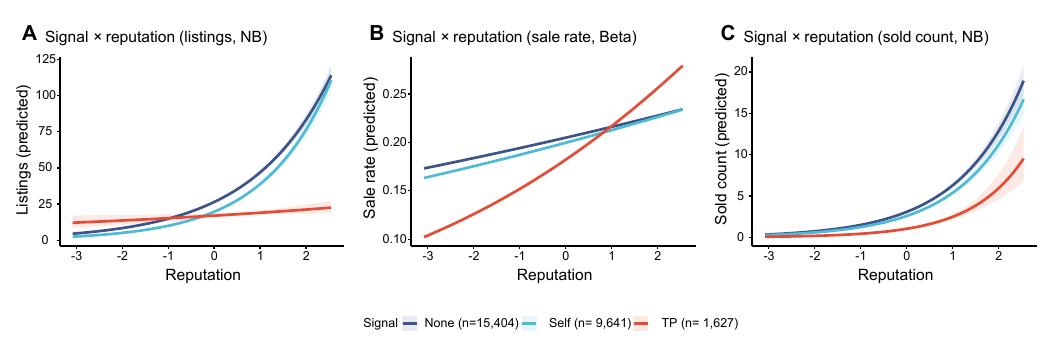}}
\caption{Seller-level volume metrics by signal type (None, Self-graded, TP-graded). \textbf{(A)} Listing count, negative binomial regression. \textbf{(B)} Sale rate, beta regression for proportions in $(0, 1)$. \textbf{(C)} Sale count, negative binomial regression.}
\label{fig:S2}
\end{figure}

\clearpage
\section*{Revenue and Margin Analysis}\label{sec:res-C}

Supplementary Fig.~\ref{fig:S3} shows revenue and margin by signal type across the reputation range, conditional on sale (panels A and B, from linear mixed-effects models with seller random effects) and in expectation (panels C and D, from a two-part model that multiplies the sale probability from a logistic mixed-effects model by the conditional expectation from the linear model). TP-graded items command higher conditional prices and margins, but the advantage erodes at high reputation (panels A and B). High-reputation sellers of TP-graded items accept smaller conditional margins, consistent with a volume-oriented strategy, and their higher sale probabilities (Supplementary Fig.~\ref{fig:S2}) compensate. Expected revenue for TP-graded items rises steeply with reputation (panel C), and expected margin follows an inverted-U pattern across reputation levels (panel D).

\begin{figure}[H]
\centering
\makebox[\textwidth][c]{\includegraphics[width=18cm]{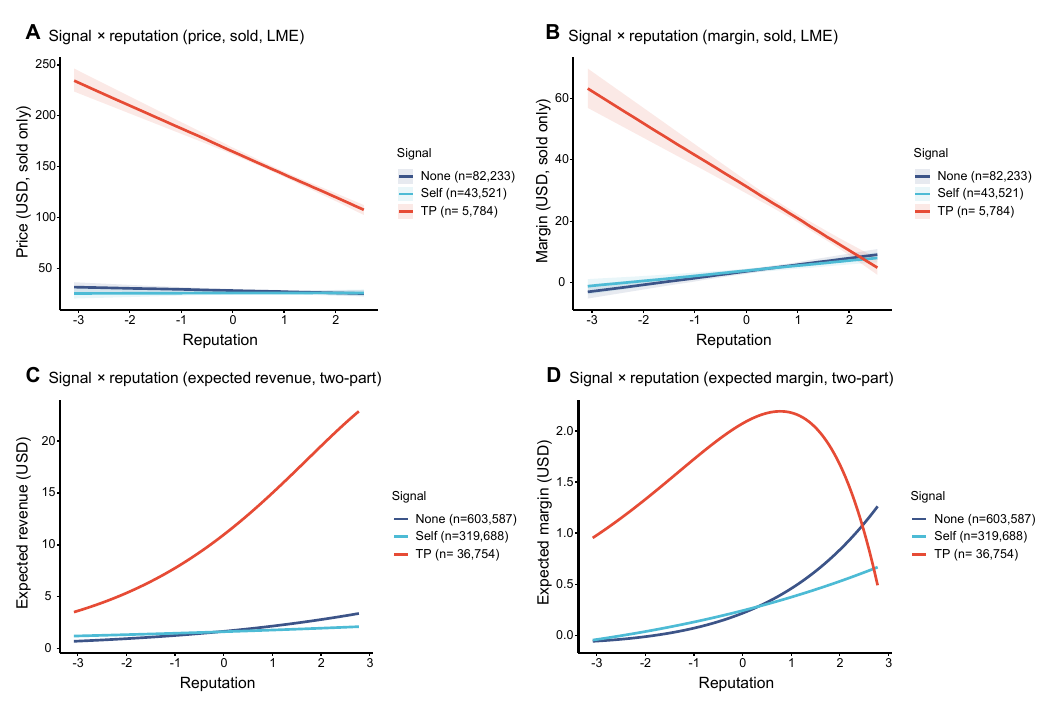}}
\caption{Revenue and margin by signal type and seller reputation, conditional and expected. \textbf{(A)} Unit sale price (USD) conditional on sale, by signal type (None, Self-graded, TP-graded), from a linear mixed-effects model with seller random effects. \textbf{(B)} Margin (sale price minus market reference price) conditional on sale, from the same model. \textbf{(C)} Expected revenue, $P(\text{sold}) \times E[\text{price} \mid \text{sold}]$, from a two-part model combining a logistic mixed-effects model (sale probability) and the linear model (conditional price). \textbf{(D)} Expected margin, $P(\text{sold}) \times E[\text{margin} \mid \text{sold}]$, from the same two-part model.}
\label{fig:S3}
\end{figure}

\clearpage
\section*{Continuous Regime Boundary in Reputation--Value Space}\label{sec:res-E}

Here we show that the regime partition reported in the main text has a continuous empirical counterpart. The empirical region map (Fig.~2B in the main text), the three-regime map (Fig.~4), and the heatmap of TP shares (Fig.~5) summarize the same partition at different levels of aggregation, and Supplementary Fig.~\ref{fig:S4} shows the continuous boundary underlying them. Panel~A is a scatter plot of self-graded (blue) and TP-graded (red) listings in the space of the seller reputation score and the log ungraded market price, overlaid with an iso-probability line of a logistic regression of signal choice on these two variables. The logistic regression gives $P(\text{TP} \mid \text{reputation}, \text{price})$, and the line is drawn where this probability equals the TP base rate of $10.3\%$, the share of TP-graded listings among graded listings. Bayes' rule then implies that along the line the density of TP-graded listings equals that of self-graded listings. The line puts $74.5\%$ of TP-graded and $21.5\%$ of self-graded listings on its TP side. Panel~B bins all graded listings in the same coordinates and shows, separately for each signal, where its listings concentrate. The mass of TP-graded listings sits above the line and that of self-graded listings below it. The iso-probability lines are parallel, so one unit of reputation raises the price threshold for TP grading by a factor of $\exp(0.24) \approx 1.27$. This descriptive slope agrees with the HBM signal-selection coefficients, which imply $0.236$ at mean reputation (Supplementary Table~\ref{tab:coef}) and a range of $0.210$ to $0.264$ across the observed reputation range.

\begin{figure}[H]
\centering
\makebox[\textwidth][c]{\includegraphics[width=18cm]{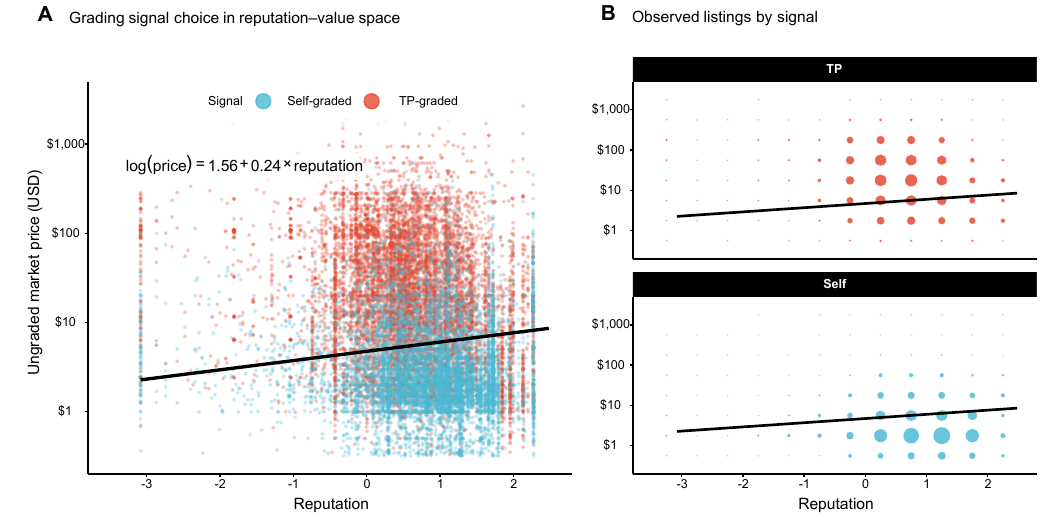}}
\caption{Continuous regime boundary in reputation--value space. \textbf{(A)} Scatter plot of self-graded (blue; $n = 10{,}000$, randomly sampled from the analysis sample) and TP-graded (red; $n = 10{,}000$, randomly sampled) listings in the space of the seller reputation score and the ungraded market price (log scale). The black line is the equal-density boundary from a logistic regression of signal choice on the log ungraded market price and the seller reputation score, fitted to all graded listings: $\log(\text{market price}) = 1.56 + 0.24 \times (\text{reputation score})$, along which $P(\text{TP} \mid \text{graded})$ equals the TP base rate ($10.3\%$) and the two signals' densities are equal. \textbf{(B)} All TP-graded (top, red; $n = 36{,}754$) and self-graded (bottom, blue; $n = 319{,}688$) listings in the same space, binned at 0.5 reputation units by half a decade in price. Bubble area is proportional to the share of that signal's listings falling in the bin, so the two panels are comparable in shape but not in absolute count. The black line repeats the boundary from (A).}
\label{fig:S4}
\end{figure}

\clearpage
\section*{MCMC Convergence Diagnostics}\label{sec:res-F}

Supplementary Table~\ref{tab:convergence} lists the potential scale reduction factor $\hat{R}$ and the bulk and tail effective sample sizes (ESS) for the 38 reported quantities of the hierarchical Bayesian model, namely the 36 model parameters and the two intraclass correlations derived from them. $\hat{R}$ ranges from $1.000$ to $1.013$ and bulk ESS from $221$ to $3{,}883$. Both extremes occur for the random-effects sold--signal correlation $\Omega_{23}$, which is not among the quantities reported in the main text. The eight chains therefore mixed and explored the posterior adequately.

\begin{table}[H]
\centering
\caption{Convergence diagnostics for the 38 reported quantities of the hierarchical Bayesian model.}\label{tab:convergence}
\fontsize{7}{8.4}\selectfont
\begin{tabular}[t]{lccc}
\toprule
\fontsize{8}{9.6}\selectfont Parameter & $\hat{R}$ & ESS (bulk) & ESS (tail)\\
\midrule
\addlinespace[0.3em]
\multicolumn{4}{l}{\fontsize{8}{9.6}\selectfont \textbf{Intercepts}}\\
\hspace{1em}Price Intercept ($\alpha_{\text{price}}$) & 1.002 & 1126 & 2131\\
\hspace{1em}Sold Intercept ($\alpha_{\text{sold}}$) & 1.001 & 3132 & 3672\\
\addlinespace[0.3em]
\multicolumn{4}{l}{\fontsize{8}{9.6}\selectfont \textbf{Signal Selection}}\\
\hspace{1em}Intercept ($\alpha$), Self-graded & 1.008 & 1238 & 1973\\
\hspace{1em}Intercept ($\alpha$), TP-graded & 1.002 & 1928 & 2155\\
\hspace{1em}Reputation ($\gamma$), Self-graded & 1.006 & 1525 & 2644\\
\hspace{1em}Reputation ($\gamma$), TP-graded & 1.003 & 2022 & 2764\\
\hspace{1em}Market Price ($\delta$), Self-graded & 1.000 & 3382 & 3536\\
\hspace{1em}Market Price ($\delta$), TP-graded & 1.000 & 3267 & 3472\\
\hspace{1em}Reputation $\times$ Market Price ($\psi$), Self-graded & 1.002 & 3479 & 3771\\
\hspace{1em}Reputation $\times$ Market Price ($\psi$), TP-graded & 1.000 & 3433 & 3653\\
\hspace{1em}Japanese Card ($\zeta$), Self-graded & 1.002 & 3649 & 3677\\
\hspace{1em}Japanese Card ($\zeta$), TP-graded & 1.000 & 3381 & 3608\\
\addlinespace[0.3em]
\multicolumn{4}{l}{\fontsize{8}{9.6}\selectfont \textbf{Price Equation}}\\
\hspace{1em}Reputation ($\beta_1$) & 1.002 & 934 & 1747\\
\hspace{1em}Japanese Card ($\beta_2$) & 1.000 & 3397 & 3326\\
\hspace{1em}Reputation $\times$ Self-graded ($\beta_3$) & 1.002 & 3645 & 3752\\
\hspace{1em}Reputation $\times$ TP-graded ($\beta_4$) & 1.002 & 3883 & 3861\\
\hspace{1em}Returns Accepted ($\beta_5$) & 1.001 & 2797 & 3359\\
\hspace{1em}Self-graded ($\beta_{\text{signal}}$) & 1.000 & 3372 & 3446\\
\hspace{1em}TP-graded ($\beta_{\text{signal}}$) & 1.001 & 3092 & 3670\\
\addlinespace[0.3em]
\multicolumn{4}{l}{\fontsize{8}{9.6}\selectfont \textbf{Sold Equation}}\\
\hspace{1em}Reputation ($\beta_1$) & 1.001 & 1715 & 2827\\
\hspace{1em}Self-graded ($\beta_2$) & 1.004 & 1330 & 2855\\
\hspace{1em}TP-graded ($\beta_3$) & 1.002 & 2570 & 3285\\
\hspace{1em}Japanese Card ($\beta_4$) & 1.001 & 3542 & 3576\\
\hspace{1em}Returns Accepted ($\beta_5$) & 1.001 & 2736 & 3543\\
\hspace{1em}Price Ratio (log) ($\beta_6$) & 1.000 & 3208 & 3347\\
\hspace{1em}Listing Duration (log) ($\beta_7$) & 1.001 & 3460 & 3469\\
\hspace{1em}Reputation $\times$ Self-graded ($\beta_8$) & 1.001 & 2525 & 3481\\
\hspace{1em}Reputation $\times$ TP-graded ($\beta_9$) & 1.001 & 3446 & 3233\\
\addlinespace[0.3em]
\multicolumn{4}{l}{\fontsize{8}{9.6}\selectfont \textbf{Variance/Structural}}\\
\hspace{1em}Price Residual SD ($\sigma_{\text{price}}$) & 1.002 & 3872 & 3388\\
\hspace{1em}Price RE SD ($\sigma_{u_1}$) & 1.012 & 798 & 1617\\
\hspace{1em}Sold RE SD ($\sigma_{u_2}$) & 1.005 & 1172 & 2093\\
\hspace{1em}Signal RE SD ($\sigma_{u_3}$) & 1.002 & 1287 & 2182\\
\hspace{1em}Price ICC & 1.012 & 831 & 1579\\
\hspace{1em}Sold ICC & 1.005 & 1172 & 2093\\
\hspace{1em}TP Factor Loading ($\lambda^{\text{TP}}$) & 1.000 & 3020 & 3513\\
\hspace{1em}RE Corr: Price--Sold ($\Omega_{12}$) & 1.002 & 1383 & 2248\\
\hspace{1em}RE Corr: Price--Signal ($\Omega_{13}$) & 1.005 & 822 & 1459\\
\hspace{1em}RE Corr: Sold--Signal ($\Omega_{23}$) & 1.013 & 221 & 532\\
\bottomrule
\end{tabular}

\end{table}

\clearpage
\section*{HBM Model Fit}\label{sec:res-G}

The predictive fit of the three-equation model complements the convergence diagnostics. The model performs adequately in each equation, with Cohen's $\kappa = 0.584$ for signal choice, $R^2 = 0.421$ for the log price ratio, and AUC $= 0.954$ for the sale outcome. Supplementary Fig.~\ref{fig:S5} shows that posterior predictive accuracy exceeds the empirical base rate for all three signal categories (panel A), that the confusion matrix is dominated by its diagonal with most off-diagonal mass between None and Self-graded (panel B), that the observed-versus-predicted scatter for the log price ratio shows slightly wider residuals at high-price TP-graded listings (panel C), that the calibration slope for sale probability is close to one (panel D), and that the ROC curve confirms strong discrimination for the sale outcome (panel E).

\begin{figure}[H]
\centering
\makebox[\textwidth][c]{\includegraphics[width=18cm]{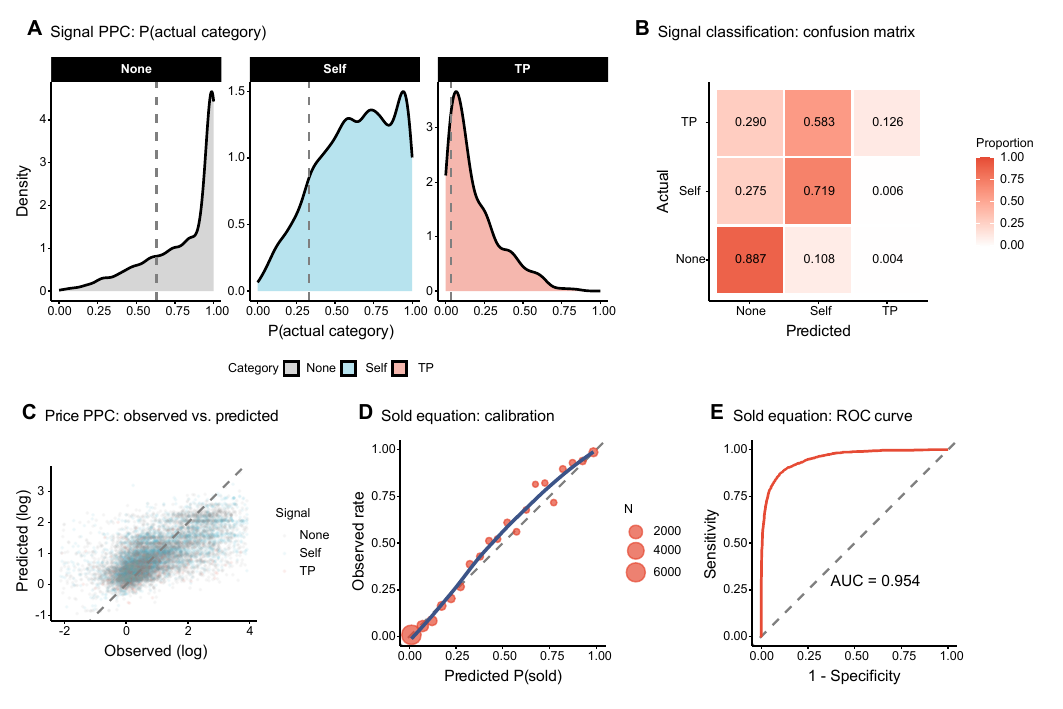}}
\caption{Model fit and predictive diagnostics across the three equations. \textbf{(A)} Posterior predictive check for signal choice, the density of $P(\text{correct category})$ by signal type; dashed lines indicate empirical base rates. \textbf{(B)} Confusion matrix for signal choice; overall accuracy $= 0.803$, balanced accuracy $= 0.577$. \textbf{(C)} Observed versus predicted log price ratio; RMSE $= 0.820$, normalized RMSE $= 0.762$. Residual distributions are similar across signal types. \textbf{(D)} Calibration plot for sale predictions, binned mean predicted $P(\text{sold})$ versus observed sale rate; calibration slope $= 1.111$. \textbf{(E)} ROC curve for sale predictions; 95\% CI for the AUC, $[0.948, 0.960]$.}
\label{fig:S5}
\end{figure}

\clearpage
\section*{Full Coefficient Estimates}\label{sec:res-H}

Supplementary Table~\ref{tab:coef} reports the posterior estimates for the 38 reported quantities of the hierarchical Bayesian model, the 36 parameters and the two intraclass correlations, with the price-equation interaction coefficients forming the structural core of the main-text argument. Estimates are posterior medians with 95\% credible intervals (2.5th and 97.5th percentiles).

\begin{table}[H]
\centering
\caption{Full posterior estimates for the 38 reported quantities of the hierarchical Bayesian model.}\label{tab:coef}
\fontsize{7}{8.4}\selectfont
\begin{tabular}[t]{lc}
\toprule
\fontsize{8}{9.6}\selectfont Parameter & Estimate [95\% CI]\\
\midrule
\addlinespace[0.3em]
\multicolumn{2}{l}{\fontsize{8}{9.6}\selectfont \textbf{Intercepts}}\\
\hspace{1em}Price Intercept ($\alpha_{\text{price}}$) & 0.853 [0.846, 0.861]\\
\hspace{1em}Sold Intercept ($\alpha_{\text{sold}}$) & 5.091 [5.032, 5.148]\\
\addlinespace[0.3em]
\multicolumn{2}{l}{\fontsize{8}{9.6}\selectfont \textbf{Signal Selection}}\\
\hspace{1em}Intercept ($\alpha$), Self-graded & -0.497 [-0.524, -0.469]\\
\hspace{1em}Intercept ($\alpha$), TP-graded & -4.140 [-4.180, -4.102]\\
\hspace{1em}Reputation ($\gamma$), Self-graded & 0.054 [0.031, 0.077]\\
\hspace{1em}Reputation ($\gamma$), TP-graded & -0.052 [-0.083, -0.020]\\
\hspace{1em}Market Price ($\delta$), Self-graded & -0.252 [-0.259, -0.244]\\
\hspace{1em}Market Price ($\delta$), TP-graded & 0.505 [0.495, 0.515]\\
\hspace{1em}Reputation $\times$ Market Price ($\psi$), Self-graded & 0.044 [0.037, 0.049]\\
\hspace{1em}Reputation $\times$ Market Price ($\psi$), TP-graded & 0.028 [0.020, 0.037]\\
\hspace{1em}Japanese Card ($\zeta$), Self-graded & 0.087 [0.065, 0.107]\\
\hspace{1em}Japanese Card ($\zeta$), TP-graded & 1.008 [0.979, 1.038]\\
\addlinespace[0.3em]
\multicolumn{2}{l}{\fontsize{8}{9.6}\selectfont \textbf{Price Equation}}\\
\hspace{1em}Reputation ($\beta_1$) & 0.001 [-0.006, 0.007]\\
\hspace{1em}Japanese Card ($\beta_2$) & -0.503 [-0.510, -0.497]\\
\hspace{1em}Reputation $\times$ Self-graded ($\beta_3$) & 0.063 [0.058, 0.069]\\
\hspace{1em}Reputation $\times$ TP-graded ($\beta_4$) & -0.095 [-0.110, -0.080]\\
\hspace{1em}Returns Accepted ($\beta_5$) & 0.026 [0.020, 0.032]\\
\hspace{1em}Self-graded ($\beta_{\text{signal}}$) & 0.093 [0.086, 0.100]\\
\hspace{1em}TP-graded ($\beta_{\text{signal}}$) & -0.197 [-0.213, -0.180]\\
\addlinespace[0.3em]
\multicolumn{2}{l}{\fontsize{8}{9.6}\selectfont \textbf{Sold Equation}}\\
\hspace{1em}Reputation ($\beta_1$) & 0.457 [0.427, 0.488]\\
\hspace{1em}Self-graded ($\beta_2$) & 0.102 [0.065, 0.138]\\
\hspace{1em}TP-graded ($\beta_3$) & 0.026 [-0.062, 0.115]\\
\hspace{1em}Japanese Card ($\beta_4$) & -0.380 [-0.414, -0.348]\\
\hspace{1em}Returns Accepted ($\beta_5$) & -0.256 [-0.289, -0.220]\\
\hspace{1em}Price Ratio (log) ($\beta_6$) & -0.951 [-0.964, -0.939]\\
\hspace{1em}Listing Duration (log) ($\beta_7$) & -1.654 [-1.666, -1.643]\\
\hspace{1em}Reputation $\times$ Self-graded ($\beta_8$) & -0.077 [-0.109, -0.045]\\
\hspace{1em}Reputation $\times$ TP-graded ($\beta_9$) & 0.094 [0.018, 0.170]\\
\addlinespace[0.3em]
\multicolumn{2}{l}{\fontsize{8}{9.6}\selectfont \textbf{Variance/Structural}}\\
\hspace{1em}Price Residual SD ($\sigma_{\text{price}}$) & 0.838 [0.837, 0.840]\\
\hspace{1em}Price RE SD ($\sigma_{u_1}$) & 0.638 [0.632, 0.645]\\
\hspace{1em}Sold RE SD ($\sigma_{u_2}$) & 1.847 [1.816, 1.877]\\
\hspace{1em}Signal RE SD ($\sigma_{u_3}$) & 2.110 [2.081, 2.139]\\
\hspace{1em}Price ICC & 0.367 [0.362, 0.372]\\
\hspace{1em}Sold ICC & 0.509 [0.501, 0.517]\\
\hspace{1em}TP Factor Loading ($\lambda^{\text{TP}}$) & 0.919 [0.911, 0.926]\\
\hspace{1em}RE Corr: Price--Sold ($\Omega_{12}$) & -0.015 [-0.036, 0.004]\\
\hspace{1em}RE Corr: Price--Signal ($\Omega_{13}$) & -0.139 [-0.153, -0.125]\\
\hspace{1em}RE Corr: Sold--Signal ($\Omega_{23}$) & 0.019 [0.001, 0.038]\\
\bottomrule
\end{tabular}

\end{table}

\clearpage
\section*{Signal Composition Across Price and Reputation Dimensions}\label{sec:res-I}

Here we show that signal composition varies differently along the price and reputation dimensions. Supplementary Fig.~\ref{fig:S6} disaggregates the $5 \times 5$ heatmap of Fig.~5 in the main text into marginal distributions by price quintile and by reputation quintile. The price distribution is unimodal, peaking at \$1 to \$3 with a long right tail (panel A), and the reputation distribution is bimodal, with a low-reputation cluster at scores of $-1$ to $0$ and a high-reputation cluster at $0.5$ to $1.5$ (panel C). Along the price axis (panel B), the TP share rises monotonically from $0.3\%$ in the lowest quintile to $12.2\%$ in the highest, the Self share declines correspondingly, and the None share is nearly unchanged. Along the reputation axis (panel D), the Self share rises monotonically from $28.7\%$ to $35.4\%$, the TP share declines from $5.6\%$ to $2.2\%$, and the None share is again nearly unchanged. Variation in signal use thus occurs between Self and TP rather than in whether a signal is used at all. The TP share varies mainly with price, whereas the Self share rises with reputation, in line with the two-dimensional regime structure.

\begin{figure}[H]
\centering
\makebox[\textwidth][c]{\includegraphics[width=18cm]{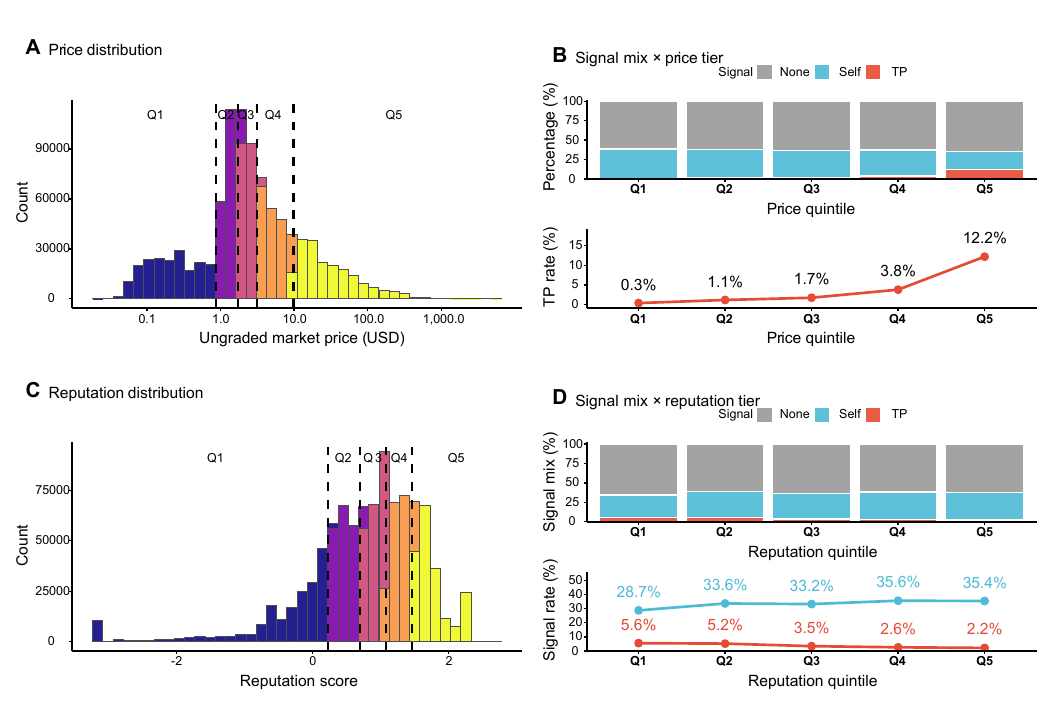}}
\caption{Empirical distributions and signal composition across price and reputation. \textbf{(A)} Distribution of ungraded market prices ($N = 960{,}029$), a log-scale histogram of listing counts with quintile boundaries. \textbf{(B)} Signal composition by market price quintile, stacked to 100\%. \textbf{(C)} Distribution of seller reputation scores, a histogram of listing counts with quintile boundaries. \textbf{(D)} Signal composition by reputation quintile.}
\label{fig:S6}
\end{figure}

\label{sec:si-results-end}
\clearpage
\FloatBarrier

\label{sec:si-refs}

\end{document}